\def\ALGS{0}
\algnewcommand{\algorithmicassumption}{\textbf{Requirement:}}
\algnewcommand{\Assume}{\item[\algorithmicassumption]}
\algnewcommand{\InlineIf}[2]{% single line if-then
  \algorithmicif\ #1\ \algorithmicthen\ #2}
\algnewcommand{\InlineElse}[1]{% single line else
  \algorithmicelse\ #1}
\algnewcommand{\InlineIfElse}[3]{% single line if-then-else
  \algorithmicif\ #1\ \algorithmicthen\ #2\ \algorithmicelse\ #3}
\algnewcommand{\InlineFor}[2]{\algorithmicfor\ #1\ \algorithmicdo\ #2} % single line for loop
\algnewcommand{\CommentLine}[1]{\(\triangleright\) \emph{\small #1}}
\algnewcommand{\algorithmicand}{\textbf{and}}
\algnewcommand{\algorithmicor}{\textbf{or}}
\algnewcommand{\FOR}{\algorithmicfor}
\algnewcommand{\OR}{\algorithmicor}
\algnewcommand{\AND}{\algorithmicand}
\algnewcommand{\IF}{\algorithmicif}
\algnewcommand{\THEN}{\algorithmicthen}
\algnewcommand{\ELSE}{\algorithmicelse}
\crefname{problem}{problem}{problems}
\Crefname{problem}{Problem}{Problems}
\theoremstyle{definition}
\newtheorem{theorem}{Theorem}
\newtheorem{proposition}[theorem]{Proposition}
\theoremstyle{remark}
\newtheorem{remark}[theorem]{Remark}
\newcommand{\ZZ}{\mathbb{Z}}
\newcommand{\ZZp}{\mathbb{Z}_{> 0}}  % positive integers
\newcommand{\QQ}{\mathbb{Q}}
\newcommand{\FF}{\mathbb{F}}
\renewcommand{\AA}{\mathbb{A}}
\newcommand{\Syz}{\operatorname{Syz}}
\newcommand{\trace}{\operatorname{tr}}
\newcommand{\HF}{\operatorname{HF}}
\newcommand{\rk}{\operatorname{rk}}
\newcommand{\rows}{\operatorname{rows}}
\newcommand{\im}{\operatorname{im}}
\newcommand{\fraka}{\mathfrak{a}}
\newcommand{\field}{\Bbbk}  % main base field
\newcommand{\closure}{\bar{\field}}  % closure of main base field
\newcommand{\pring}{\field[x_1,\ldots,x_k]} % multivariate polynomial ring
\newcommand{\ring}{\mathcal{R}} % multivariate polynomial ring, shorthand
\newcommand{\ideal}{\mathcal{I}} % typesetting for ideal
\newcommand{\module}{\mathcal{M}} % typesetting for module
\newcommand{\matmod}{\module_n(\ring)} % n x n matrices over ring
\newcommand{\emodule}{\mathcal{E}} % typesetting for modules in free resolution
\newcommand{\macmat}{\mathscr{M}} % typesetting for Macaulay matrix
\newcommand{\redmacmat}{\bar{\mathscr{M}}} % typesetting for reduced Macaulay matrix
\newcommand{\genby}[1]{\langle #1 \rangle} % ideal or module generated by #1
\newcommand{\genbyPar}[1]{\left\langle #1 \right\rangle} % ideal or module generated by #1, with variable parentheses
\newcommand{\detidealGen}[2]{\ideal_{#2}(#1)} % determinant ideal, general
\newcommand{\detideal}{\detidealGen{M}{r}} % determinant ideal, fixed matrix M and rank r
\newcommand{\detidealCorkOne}{\detidealGen{M}{n-2}} % determinant ideal, fixed matrix M and r = n-2
\newcommand{\detsystemGen}[2]{F_{#2}(#1)} % set of minors, determinantal polynomial system, general
\newcommand{\detsystem}{\detsystemGen{M}{r}} % set of minors, determinantal polynomial system, fixed matrix M and rank r
\newcommand{\detsystemCorkOne}{\detsystemGen{M}{n-2}} % determinant ideal, fixed matrix M and r = n-2
\newcommand{\ordsig}{\prec_{\mathrm{sig}}} % order on signatures
\newcommand{\ord}{\prec} % order on ring monomials
\newcommand{\ordpot}{\prec_{\mathrm{pot}}} % order on ring monomials
\newcommand{\lt}{\operatorname{lt}} % leading term w.r.t \ord
\newcommand{\ltpot}{\operatorname{lt}_{\mathrm{pot}}} % leading term w.r.t \ordpot
\newcommand{\cofac}[1]{{#1}^*} % matrix of cofactors of #1
\newcommand{\coeffa}{\boldsymbol{a}} % vector of coefficients of linear forms
\newcommand{\mathtrue}{\mathtt{true}} % monospace true in math mode
\newcommand{\mathfalse}{\mathtt{false}} % monospace false in math mode
\newcommand{\property}{\mathscr{P}} % property of ideal
\newcommand{\genmat}{\mathscr{A}}
\newcommand{\gendetideal}{\detidealGen{\genmat}{r}}
\newcommand{\gendetsystem}{\detsystemGen{\genmat}{r}}
\newcommand{\gendetidealCorkOne}{\detidealGen{\genmat}{n-2}}
\newcommand{\CM}{\mathrm{CM}} % Cohen-Macaulay property
\newcommand{\matringbasis}[2]{\boldsymbol{E}_{#1,#2}}
\newcommand{\matentry}[2]{m_{#1,#2}}
\newcommand{\affspaceCorkOne}{\AA^{4\cdot n^2}_{\field}} % affine space over k of dimension 4 x n^2 (number of indeterminates in generic linear form matrix)
\newcommand{\affspace}{\AA^{k n^2}_{\field}} % affine space over k of dimension k x {n choose (r+1)}^2 (number of indeterminates in generic linear form matrix)
\newcommand{\myparagraph}[1]{\smallskip\emph{#1.}} % paragraphs for intro
\author{Sriram Gopalakrishnan}
\author{Vincent Neiger}
\author{Mohab Safey El Din}
\address{Authors' affiliation: Sorbonne Université, CNRS, LIP6, F-75005 Paris, France}
\email{firstname.lastname@lip6.fr}
\title{Refined \texorpdfstring{F\textsubscript{5}}{F5} Algorithms for Ideals of Minors of Square Matrices}
\begin{document}

\maketitle

\begin{abstract}
We consider the problem of computing a grevlex Gröbner basis for the set
$F_r(M)$ of minors of size $r$ of an $n\times n$ matrix $M$ of generic linear
forms over a field of characteristic zero or large enough. Such sets are
not regular sequences; in fact, the ideal \(\langle F_r(M) \rangle\) cannot be
generated by a regular sequence. As such, when using the general-purpose
algorithm $F_5$ to find the sought Gröbner basis, some computing time
is wasted on reductions to zero. We use known results about the first syzygy
module of $F_r(M)$ to refine the $F_5$ algorithm in order to detect more
reductions to zero. In practice, our approach avoids a significant number of
reductions to zero. In particular, in the case $r=n-2$, we prove that our new
algorithm avoids all reductions to zero, and we provide a corresponding
complexity analysis which improves upon the previously known estimates.
\end{abstract}
%% WARNING: F_r(M) is a macro, careful if changing it (but no macro in abstract)

\section{Introduction}
\label{sec:intro}

\myparagraph{Motivation and problem}
Let $M$ be an $n\times n$ matrix with entries in the polynomial ring \(\ring =
\pring\) where $\field$ is a field. For $r < n$, we let $\detideal$ be the
determinantal ideal generated by the sequence $\detsystem$ of all minors of
\(M\) of size $r+1$. We consider the problem of computing the common roots in $\closure^{k}$ to
$\detsystem$, hence those points at which $M$ has rank at most $r$.
This $\mathcal{NP}$-hard problem \emph{MinRank}~\cite{BussFrandsenShallit1999},
and its variants where $M$ may be rectangular, lies at the heart of
multivariate cryptography. It is at the foundations of
several schemes~\cite{Courtois2001,Patarin1996,KipnisShamir1999} and is still
used to assess the security of encryption and signature
schemes~\cite{FaugereLevyPerret2008,DingSchmidt2005,Beullens2022,
BaenaBriaudCabarcasPerlnerSmithToneVerbel2022,
BardetBriaudBrosGaboritNeigerRuattaTillich2020,
BardetBrosCabarcasGaboritPerlnerSmithToneTillitchVerbel2020}.

Determinantal ideals also arise in fundamental areas such as effective real
algebraic geometry as they encode critical points (see e.g.~\cite{FSS12, Spa14}),
then used to solve a variety of problems. This includes polynomial
optimization~\cite{GS14,BGHM14}, computing sample points and answering connectivity queries in smooth real algebraic
sets~\cite{SaSc03, BGHP05, BGHSS, SaSc17}, determining the
dimension of real algebraic sets~\cite{BaSa15, LaSa21}, and quantifier elimination over the
reals~\cite{HoSa09,HoSa12,LeSa21}.

\myparagraph{Determinantal ideals and polynomial system solving}
Determinantal ideals enjoy plenty of combinatorial and algebraic
properties~\cite{BrunsVetter1988, Lascoux1978, BCRV22} which can be leveraged to
better understand the complexity of computing their roots, and to adapt and
accelerate polynomial system solvers in this context. The most advanced results
in
this direction have been achieved in the context of symbolic homotopy techniques
with the design of an adapted homotopy pattern~\cite{hauenstein:hal-01719170}
which has next been refined to take into account specific structures when the
entries of the matrix $M$ are sparse~\cite{LSSV21}.

In this paper, we focus on the problem of computing {\em Gröbner bases} of the
ideal $\detideal$ w.r.t.\ some admissible monomial ordering, under the assumptions
that $\detideal$ has dimension $0$ (or is \(\ring\)) and that the entries of
$M$ have total degree at most $1$.

\myparagraph{Gröbner bases algorithms and determinantal ideals}
Since Buchberger's algorithm~\cite{bGroebner1965}, the quest for fast algorithms
for computing Gröbner bases has been driven by two main issues: {\em (i)} finding
better strategies for handling critical pairs during the Gröbner basis
construction and {\em (ii)} hunting reductions to $0$ which are
instrinsically related to algebraic objects named syzygies that are
associated to the ideal under consideration. Issue {\em (i)} has been addressed
by Faugère's celebrated $F_4$ algorithm \cite{Faugere1999}, which also made
explicit the use of linear algebra subroutines in Gröbner bases algorithms. While a
lot remains to be done in this direction (see e.g. \cite{berthomieu:hal-03590430}), much attention
has focused on issue {\em (ii)} and variants of Faugère's $F_5$
algorithm~\cite{Faugere2002} have been developed in several directions to give
rise to signature-based Gröbner bases algorithms (see~\cite{EderFaugere2016} and
references therein). One byproduct of these works, which finds its roots in
foundational works by Lazard and Giusti~\cite{Lazard1983,Giusti1984}, is that
they paved the way to complexity estimates {\it under some regularity
assumptions}, thanks to the reduction to linear algebra and degree bounds on the
maximum degree reached during the computation (related to the classical notion
of index of regularity~\cite[Chap.\,9,\,\S3]{CoxLittleOShea2007}).

This has been developed, for determinantal ideals in~\cite{FSS10,
  FaugereSafeySpaenlehauer2013} which yield complexity estimates for computing
Gröbner bases {\em under regularity assumptions} (which are generic in the sense
of algebraic geometry). These estimates are coarse: they do not leverage the
shape of the matrices encountered during the computation.

Already in the simpler case of regular sequences, 
by exploiting the
fact that the $F_5$ algorithm avoids all reductions to zero in this case,
a sharper complexity analysis
of $F_5$ \cite{BardetFaugereSalvy2015} shows significant improvements
against such coarse estimates.

In the context of determinantal ideals, mimicking this to get better complexity
estimates is premature. Indeed, $\detsystem$ is {\em not} a regular
sequence, and running the $F_5$ algorithm with input $\detsystem$ does lead to a
number of reductions to $0$. Hence there is a need to refine and tune the $F_5$
algorithm for determinantal ideals. Such a refinement has already been achieved
for boolean polynomial systems~\cite{BardetFaugereSalvySpaenlehauer2013}.
However, recall that these reductions to $0$ are related to so-called syzygy
modules of the ideal under study. Syzygy modules of determinantal ideals are
notoriously more intricate than those of ideals generated by regular sequences
or boolean systems. 

In this paper, we tackle the following problems: {\em (i)} What is the suitable
notion of regularity one can attach to determinantal ideals in order to hunt
reductions to $0$? {\it (ii)} What are the properties of modules of syzygies
associated to determinantal ideals one can leverage under this notion of
regularity? {\em (iii)} How to refine the $F_5$ algorithm for determinantal ideals
to obtain fewer reductions to $0$ and, ultimately, are there some instances
of determinantal ideals for which one can prove that there are no reductions to
$0$?

\myparagraph{Foundations}
We begin by recalling first the connection between free resolutions and syzygy
modules of ideals, and then the \textit{syzygy criterion} from
\cite{EderFaugere2016} which reveals the link between free resolutions and
reductions to zero in $F_5$. In \cref{alg:mF5-syz}, we give an altered version
of the standard matrix-$F_5$ algorithm: it computes Gröbner bases for modules
over $\ring$ and exploits the full syzygy criterion (see \cref{prop:syz-crit}),
allowing us to leverage reductions to zero in lower degrees to avoid reductions
to zero in subsequent degrees.

Explicitly describing free resolutions of determinantal ideals is
  in general an extremely difficult problem. It is in fact not known if there
  even exists a minimal free resolution of the system of $(r+1)$-minors of a
  matrix of indeterminates over $\ZZ$ which remains minimal under arbitrary base
  change. While the Lascoux resolution \cite{Lascoux1978} provides a free
  resolution of determinantal ideals, it is not minimal and requires that the
  coefficient ring have characteristic zero. Instead of computing a free
  resolution of these determinantal ideals directly, we instead adopt a strategy
  which relies on a theorem of Kurano \cite{kurano1989}. It describes
  the connection between syzygies of $(r+1)$-minors of a matrix $M$ and syzygies
  between $(r+1)$-minors of the $(r+2)\times(r+2)$ submatrices of $M$. Using
  this theorem, we essentially reduce to the case $r=n-2$. In this case, an
  explicit free resolution exists, given by Gulliksen and Neg{\aa}rd in
  \cite{BrunsVetter1988}.

\myparagraph{Main results}
Having made this reduction, we establish the genericity property which our
ideals must satisfy in order for the Gullik\-sen-Neg{\aa}rd complex to be exact:
for any $1\le r < n$, the ideal of $(r+1)$-minors of an $n\times n$ matrix of
indeterminates has the so-called \emph{Cohen-Macaulay} property. Thus, for a
suitably generic choice of coefficients of the linear forms in $M$, the ideal
$\detideal$ is Cohen-Macaulay as well. It is precisely under the genericity
assumption derived from this notion that the complex of Gulliksen and Neg{\aa}rd
is a free resolution of $\detidealCorkOne$, and can therefore be exploited to
avoid reductions to zero.

By tracing basis elements for the free modules which make up the complex of
Gulliksen and Neg{\aa}rd, we are able (\cref{thm:syz-corank-1}) to explicitly
compute a generating set for the first syzygy module of the system of
$(n-1)$-minors of an $n\times n$ matrix of linear forms, provided the above
stated genericity assumption holds. Kurano' s result \cite{kurano1989} states
that for any $1\le r < n$, the first module of syzygies $\Syz(\detsystem)$ is
generated by the syzygies between the $(r+1)$-minors of each $(r+2)\times (r+2)$
submatrix of $M$.

Therefore, combining the complex of Gulliksen and Neg{\aa}rd with the result of
\cite{kurano1989}, we are able to explicitly compute a full generating set for
$\Syz(\detsystem)$, and subsequently provide
\cref{alg:det-F5}, which computes a grevlex Gröbner basis for $\detideal$ while
avoiding all reductions to zero which arise from the syzygies in degree one.

Under our genericity assumption, when $r=n-2$, the Gulliksen-Neg{\aa}rd complex
allows us to compute generating sets for the higher syzygy modules of
$\detsystemCorkOne$ as well. In \cref{prop:syz-2-corank-1}, we give explicit
generators for the second syzygy module of $\detidealCorkOne$. This study
culminates in \cref{alg:det-F5-corank-one} which is an altered version of
matrix-$F_5$ which avoids all reductions to zero. Finally, in
\cref{prop:det-hilb-series}, we again exploit the Gulliksen-Neg{\aa}rd complex
to provide an explicit form for the Hilbert series of $\detidealCorkOne$ when
the entries of $M$ are sufficiently generic homogeneous linear forms, and when $\detidealCorkOne$ has dimension zero ($k=4$). In
\cref{prop:corank-one-complexity}, we use this series to give a complexity
analysis of our new algorithm in the case $r=n-2$, demonstrating that
asymptotically, the arithmetic complexity of our new algorithm is in
$O(n^{4\omega-1})$, while the current best-known asymptotic
arithmetic complexity of computing a grevlex Gröbner basis for
$\detidealCorkOne$ is in $O(n^{5\omega+2})$. Here,
$2\le\omega\le 3$ is a complexity exponent for matrix multiplication. 

We conclude by giving, in \cref{table:corank-one}, some experimental data
showing the amount of reductions to zero that is saved by our contributions and
their practical interest.

\myparagraph{Perspectives}
In \cite{ma1994}, it is shown that in
some cases, one can obtain generators for the second syzygy module of
$\detideal$ by lifting second syzygies of minors of submatrices, as is the case
for first syzygies. Thus, the careful treatment of the Gulliksen-Neg{\aa}rd
complex which we give in this paper could be exploited in future works to avoid
more reductions to zero when $r<n-2$.

Similarly, suppose $M$ is no longer a square matrix, but is instead an $n\times
m$, $n\ne m$ matrix of generic homogeneous linear forms over $\field$. Then when
$r=\min(n,m)-1$ so that $\detideal$ is the ideal of maximal minors of $M$, the
Eagon-Northcott complex (see \cite[2.C]{BrunsVetter1988} and
\cite{EagonNorthcott1962}) provides a free resolution of $\detideal$.
Similarly, when $r=\min(n,m)-2$, the Akin-Buschbaum-Weyman complex (see
\cite{AkinBuschbaumWeyman1981}) provides a free resolution of $\detideal$.
Again, the tools and methods brought in this paper could be adapted to accelerate
Gröbner bases computations in this case and yield new complexity bounds. 

Finally, in full generality, the Lascoux resolution (see \cite{Lascoux1978}),
is a free resolution for $\detideal$ for any $n,m,r$ provided
$\QQ\subseteq\field$. Again, one may expect refined F\textsubscript{5}
algorithms by leveraging this resolution.

\section{Preliminaries}
\label{sec:prelim}

\subsection{Syzygies}
\label{sec:prelim:syzygies}

We recall basic definitions and properties of syzygy modules, when working over
the Noetherian ring \(\ring = \pring\). We refer to \cite{Eisenbud1995}
for more details. For a finitely generated \(\ring\)-module $\module =
\genby{p_1,\dots, p_\ell}$, the \emph{first syzygy module} of $\module$ is
defined as
\[
	\Syz(\module) :=
  \{(s_1,\ldots,s_\ell) \in \ring^{\ell} :
  s_1 p_1 + \cdots + s_\ell p_\ell = 0 \} .
\] 
This definition depends on the generators; we sometimes write
\(\Syz(p_1,\ldots,p_\ell)\). From there one inductively defines the \emph{$j$-th
  syzygy module of \(\module\)} as follows. Since $\ring$ is Noetherian,
$\Syz_{j-1}(\module)$ is finitely generated. With generators $\{q_1,\dots,q_t\}$
for $\Syz_{j-1}(\module)$,
\[
	\Syz_j(\module) :=
  \{(s_1,\dots, s_t)\in \ring^{t} :
  s_1 q_1 + \cdots + s_t q_t = 0 \} .
\] 

It is frequent that \(\module\) is the ideal generated by polynomials $F =
(f_1,\dots, f_\ell)  \subseteq\ring$. Then, the first syzygy module of \(F\)
contains the Koszul syzygies, which are those following from the commutativity
of polynomial multiplication: $f_if_j-f_jf_i=0$. In fact, they generate
\(\Syz(F)\) in the case of \emph{regular sequences} (that is, when $f_i$ is not
a zero-divisor in $\ring/\genby{f_1,\dots, f_{i-1}}$ for any $2\le i\le\ell$):

\begin{theorem}[{\cite[Thm.\,A.2.49]{Eisenbud2005}}]
  \label{thm:syz-regular-seq}
  If $(f_1,\dots,f_\ell)$ is a regular sequence, then $\Syz(F) =
  \genbyPar{f_ie_j-f_je_i:1\le i,j\le\ell, i\ne j}$ where $e_i$ is $i$-th
  standard basis vector. 
\end{theorem}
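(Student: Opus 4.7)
The plan is to prove the theorem by induction on $\ell$, exploiting the defining property of a regular sequence at each step. One containment is immediate: every Koszul element $f_i e_j - f_j e_i$ lies in $\Syz(F)$ by commutativity of multiplication, so $\genbyPar{f_i e_j - f_j e_i : i \ne j} \subseteq \Syz(F)$. The work lies in the reverse containment.

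For the base case $\ell = 1$, the claim $\Syz(f_1) = 0$ is exactly the statement that $f_1$ is not a zero-divisor in $\ring$, which is part of the hypothesis that $(f_1)$ is a regular sequence. For the inductive step, assume the result holds for regular sequences of length $\ell - 1$, and let $s = (s_1, \ldots, s_\ell) \in \Syz(F)$. The relation $s_1 f_1 + \cdots + s_\ell f_\ell = 0$ forces $s_\ell f_\ell \in \genby{f_1, \ldots, f_{\ell-1}}$. Since $f_\ell$ is not a zero-divisor modulo $\genby{f_1, \ldots, f_{\ell-1}}$ by the regular-sequence hypothesis, we deduce $s_\ell \in \genby{f_1, \ldots, f_{\ell-1}}$, so there exist $a_1, \ldots, a_{\ell-1} \in \ring$ with $s_\ell = a_1 f_1 + \cdots + a_{\ell-1} f_{\ell-1}$.

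The next step is to use the $a_i$ to eliminate the last coordinate by subtracting Koszul syzygies. Consider the syzygy
\[
  s' := s - \sum_{i=1}^{\ell-1} a_i (f_i e_\ell - f_\ell e_i) .
\]
Its $\ell$-th coordinate is $s_\ell - \sum_{i=1}^{\ell-1} a_i f_i = 0$, so $s'$ has the form $(s_1', \ldots, s_{\ell-1}', 0)$ and yields an element of $\Syz(f_1, \ldots, f_{\ell-1})$. Since $(f_1, \ldots, f_{\ell-1})$ is also a regular sequence (regularity is preserved by taking initial segments), the inductive hypothesis writes $s'$ as a $\ring$-combination of the Koszul syzygies $f_i e_j - f_j e_i$ for $i, j \le \ell - 1$. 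Combining this with the Koszul syzygies we subtracted to form $s'$ expresses $s$ in the desired form.

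The main subtlety is ensuring that the reduction step is well-posed: it is crucial that $f_\ell$ be a non-zero-divisor modulo the previous $f_i$'s (which is exactly the regular-sequence hypothesis at position $\ell$) and that $(f_1, \ldots, f_{\ell-1})$ itself remain a regular sequence so the inductive hypothesis applies. Both are immediate from the definition, so no substantial obstacle arises; the argument is essentially a clean bookkeeping of Koszul cancellations made possible by the regularity hypothesis.
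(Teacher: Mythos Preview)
Your argument is correct and is in fact the standard proof of this classical result. The induction is clean: the regular-sequence hypothesis at position $\ell$ is exactly what lets you pull $s_\ell$ into $\genby{f_1,\ldots,f_{\ell-1}}$, and subtracting off the appropriate Koszul syzygies to kill the last coordinate reduces you to a shorter regular sequence. No gaps.

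As for comparison with the paper: the paper does not prove this statement at all. It simply records it as a known fact with a citation to \cite[Thm.\,A.2.49]{Eisenbud2005}, and then moves on to use it as background for the discussion of syzygies of determinantal ideals. Your proof is essentially the argument one finds in the cited reference (or in any standard commutative-algebra text treating the Koszul complex), so there is nothing to contrast methodologically --- you have supplied what the paper deliberately omitted.
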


In the context of $\detsystem$, while the Koszul syzygies are among the syzygies
of the minors of $M$, they do not generate $\Syz(\detsystem)$.

\subsection{Free resolutions}
\label{sec:prelim:resolutions}

As highlighted in \cref{sec:intro}, in relation to the \(k\)-th syzygy module of
\(\detsystem\), our approach involves the description of a \emph{free resolution}
of \(\detideal\) (when \(r=n-2\)). For a finitely generated \(\ring\)-module
$\module$, a free resolution of $\module$ is an exact complex
\[
	\cdots
  \xrightarrow{d_{t+1}} \emodule_{t}
  \xrightarrow{d_t} \emodule_{t-1}
  \xrightarrow{d_{t-1}} \cdots
  \xrightarrow{d_2} \emodule_1
  \xrightarrow{d_1}  \emodule_0
  \xrightarrow{\epsilon} \module
  \to 0
\] 
where for each $j>0$, $\emodule_j$ is a finitely generated free $\ring$-module,
and the $d_j$ are $\ring$-module homomorphisms. The exactness condition
precisely means that
$\ker(d_j)=\im(d_{j+1})$. 
The free resolution $\emodule_\bullet$ is said to be \emph{finite} if there
exists some $m \ge 0$ such that for all $j>m$, $\emodule_j=\{0\}$; then
the smallest such $m$ is called the \emph{length} of
$\emodule_\bullet$.  In general, modules need not have finite free resolutions;
however, it is the case for finitely generated modules over \(\ring=\pring\):

\begin{theorem}[Hilbert's syzygy theorem]
  \label{thm:hilbert-syzygy}
  Let $\module$ be a finitely generated \(\ring\)-module. There exists a
  free resolution
	\[
		0\to \emodule_m\xrightarrow{d_m}\emodule_{m-1}\xrightarrow{d_{m-1}}\cdots\xrightarrow{d_2}\emodule_1\xrightarrow{d_1} \emodule_0\xrightarrow{\epsilon} \module\to 0
	\] 
  whose length \(m\) is at most the number of variables \(k\).
\end{theorem}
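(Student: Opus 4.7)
The plan is to build the resolution explicitly through repeated applications of Schreyer's construction on Gröbner bases, and to bound the length by tracking which variables can appear in successive leading terms. Starting from a generating set $(p_1,\dots,p_\ell)$ for $\module$, fix a position-over-term monomial order on $\ring^\ell$ and replace $(p_1,\dots,p_\ell)$ by a Gröbner basis $G_0 = (g_1,\dots,g_t)$ for the kernel of $\ring^\ell \twoheadrightarrow \module$. Schreyer's theorem then yields a Gröbner basis $G_1$ for $\Syz(G_0)$---one syzygy for every $S$-pair, read off from its reduction-to-zero certificate---under a canonically induced order on the ambient free module.

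The heart of the argument is the \emph{variable-elimination} property of Schreyer's construction: after sorting the elements of $G_0$ so that the positions of their leading terms form a non-decreasing sequence, the leading term of each element of $G_1$ has the form $\lambda x^\alpha e_j$ with $x^\alpha$ omitting at least one variable, the omitted variable being determined by the position $j$ and the chosen sort. Iterating Schreyer's construction produces successive Gröbner bases $G_2, G_3, \dots$ for $\Syz_2(\module), \Syz_3(\module), \dots$, whose leading monomials involve strictly fewer variables at each step.

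After at most $k$ iterations, the leading monomials of $G_k$ become pure unit vectors (no variables at all), at which point every $S$-pair reduces trivially to zero and $\Syz_{k+1}(\module) = 0$. Assembling the maps induced by expressing the elements of each $G_j$ in terms of the previous basis gives the desired free resolution of length at most $k$.

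The main technical obstacle is setting up the induced orders on successive syzygy modules correctly and rigorously proving the variable-elimination claim at each step: one must track how the leading terms of Schreyer's $S$-polynomial syzygies depend on the leading terms of the underlying basis and on the reduction certificates, and argue---via a lexicographic-type refinement of the induced order, combined with the sort of the generators by leading position---that a new variable is systematically dropped at every stage. Once this bookkeeping is in place, termination after $k$ steps and exactness of the resulting complex both follow from standard Gröbner-basis arguments.
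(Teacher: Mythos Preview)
The paper does not prove Hilbert's syzygy theorem; it is quoted in the preliminaries as a classical background result, with no argument given. So there is no proof in the paper to compare your proposal against.

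Your outline is the standard Schreyer proof and is sound in substance. Two small points are worth tightening. First, your indexing drifts: you define $G_0$ as a Gr\"obner basis for $\ker(\ring^\ell\twoheadrightarrow\module)$ and $G_1$ as a Gr\"obner basis for $\Syz(G_0)$, but then say ``$G_2,G_3,\dots$ for $\Syz_2(\module),\Syz_3(\module),\dots$'', which is off by one relative to the preceding sentence. Second, the step ``leading monomials of $G_k$ are unit vectors $\Rightarrow$ $\Syz_{k+1}(\module)=0$'' needs a word more care to hit length $\le k$ rather than $k+1$. What the unit-vector condition gives directly (after reducing the basis so that no leading term divides another) is that distinct elements have leading terms in distinct components, hence there are no $S$-pairs and the \emph{next} syzygy module vanishes; with your indexing that is one step too late. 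The usual fix is to invoke the variable-elimination bound one stage earlier: once the leading terms of $G_{k-1}$ lie in a single variable, reducing the basis again forces at most one leading term per component, so already $\Syz(G_{k-1})=0$ and the resolution stops at $\emodule_k$. Alternatively one notes that the penultimate kernel is free and truncates there. Either adjustment is routine.
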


\begin{proposition}
  \label{prop:syzygy-resolution}
  Let $\module$ be a finitely generated \(\ring\)-module, $\emodule_\bullet$ be
  a free resolution of $\module$ of length $m\le k$, and $\ell$ be the rank of
  $\emodule_0$. Let $\{e_1,\dots, e_{\ell}\}$ be the standard basis for
  $\emodule_0$, and $p_i=\epsilon(e_i)$ for $1\le i\le\ell$. Then
  $\ker(\epsilon)=\Syz(p_1,\dots, p_{\ell})$.
\end{proposition}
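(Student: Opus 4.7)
The plan is to unwind the definitions and observe that the statement is essentially tautological once the identification $\emodule_0 \cong \ring^\ell$ via the basis $\{e_1,\dots,e_\ell\}$ is made explicit. The only real content is that exactness of $\emodule_\bullet$ at $\module$ guarantees $\epsilon$ is surjective, so the tuple $(p_1,\dots,p_\ell)$ is indeed a generating set for $\module$, which is what makes the right-hand side $\Syz(p_1,\dots,p_\ell)$ well-defined according to the definition recalled in \cref{sec:prelim:syzygies}.

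First I would fix a general element of $\emodule_0$ and write it as $s_1 e_1 + \cdots + s_\ell e_\ell$ with uniquely determined coefficients $s_i \in \ring$, using freeness. Then, by $\ring$-linearity of $\epsilon$, one computes
\[
\epsilon(s_1 e_1 + \cdots + s_\ell e_\ell)
= s_1 \epsilon(e_1) + \cdots + s_\ell \epsilon(e_\ell)
= s_1 p_1 + \cdots + s_\ell p_\ell .
\]
Hence $(s_1,\dots,s_\ell) \in \ker(\epsilon)$, viewed as an element of $\ring^\ell \cong \emodule_0$, if and only if $s_1 p_1 + \cdots + s_\ell p_\ell = 0$, which is precisely the condition that $(s_1,\dots,s_\ell) \in \Syz(p_1,\dots,p_\ell)$.

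There is no real obstacle here; the hypotheses on the length $m\le k$ and the higher modules $\emodule_j$ are not used at all for this statement. The proposition is really just recording that, under the identification of $\emodule_0$ with $\ring^\ell$ via the standard basis, the map $d_1 \colon \emodule_1 \to \emodule_0$ in the resolution has image equal to the classical syzygy module of the chosen generators $p_1,\dots,p_\ell$ of $\module$. I would simply state the two-line computation above and conclude.
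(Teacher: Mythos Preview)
Your proposal is correct; the statement is indeed tautological once one identifies $\emodule_0$ with $\ring^\ell$ via the given basis, and your computation makes this explicit. The paper itself states this proposition without proof, treating it as an immediate consequence of the definitions, so your write-up is if anything more detailed than what the paper provides.
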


Following \cref{prop:syzygy-resolution}, if we fix a generating set
$\{q_1,\dots, q_t\}$ of $\Syz(\module)=\ker(\epsilon)$, then we can take
$\emodule_1=\ring^t$ and, as a matrix, \(d_1 = (q_{ij})_{1 \le i \le t, 1 \le j
\le \ell}\).
%%\[
%%d_1=
%%\begin{pmatrix} 
%%	q_{11} & q_{12} & \cdots & q_{1\ell}\\
%%	q_{21} & q_{22} & \cdots & q_{2\ell}\\
%%	\vdots & \vdots & \ddots & \vdots\\
%%	q_{t1} & q_{t2} & \cdots & q_{t\ell}
%%\end{pmatrix} 
%%.\] 
Continuing in this fashion, we construct \(d_2,\ldots,d_m\) such that
$\Syz_{j+1}(\module)=\ker(d_{j})$ for $1\le j\le m$.

\subsection{The matrix-\texorpdfstring{F\textsubscript{5}}{F5} algorithm}
\label{sec:prelim:matrix-f5}

The matrix-$F_5$ algorithm \cite{BardetFaugereSalvy2015} is based on $F_5$
\cite{Faugere2002}. For the needs of this paper, we describe here
a version of the former which exploits a more general \emph{syzygy criterion}
of the latter, as explained below.

Throughout, we will take $\ord$ to be the grevlex monomial order on $\ring$,
and $\ordpot$ to be the \textit{position over term} order on the free module
$\ring^{t}$, for any $t\ge 1$. That is, for monomials $x =
(0,\ldots,0,x_i,0,\ldots,0)$ and $y = (0,\ldots,0,y_j,0,\ldots,0)$ in
$\ring^{t}$ with respective supports \(i\) and \(j\), $x\ordpot y$ if
and only if \(i < j\) or (\(i = j\) and \(x_i\ord y_j\)).

\subsubsection{Macaulay matrices; signatures}
We take the standard grading by degree on $\ring$, which
  induces a grading on the free module $\ring^{t}$ for any $t\in\ZZp$.
Let $F=(f_1,\dots,f_\ell)\subseteq\ring^{t}$ be a sequence of homogeneous elements of $\ring^{t}$. That is, for each $1\le i\le\ell$, all coordinates of $f_i$ (with respect to the standard basis of $\ring^{t}$) are homogeneous of the same degree. We
assume $d_1\le d_2\le\dots\le d_\ell$, where $d_i=\deg(f_i)$, without loss of
generality. For $d\ge d_1$ and $1\le i\le\ell$, let $\macmat_{d,i}$ be the
Macaulay matrix of $(f_1,\dots,f_i)$ in degree $d$. Each row of $\macmat_{d,i}$
corresponds to a polynomial $\tau f_j$ where $1\le j\le i$, $d_j\le d$, and
$\tau$ is a monomial of degree $d-d_j$; the pair $(j,\tau)$ is called the
\emph{signature} of this row. The columns of $\macmat_{d,i}$ are indexed by the monomials of $\ring^{t}$ of degree $d$, and are ordered in decreasing order with respect to $\ordpot$.
We take a position over term order \(\ordsig\)
on the set of pairs $(j,\tau)$ with $1\le j\le\ell$ and $\tau$ a monomial of
$\ring$:
\[
	(j',\tau') \ordsig (j,\tau)
  \quad\text{if}\quad
  j'<j \text{ or } (j'=j \text{ and } \tau' \ord \tau).
\] 
A \emph{valid row operation} on $\macmat_{d,i}$ consists in adding to a row
with signature $(j,\tau)$ some $\Bbbk$-multiple of a row with signature which
is \(\ordsig\)-less than \((j,\tau)\). We denote by $\redmacmat_{d,i}$ any row
echelon form of $\macmat_{d,i}$ obtained via a sequence of valid row
operations. We will denote by $\lt(\redmacmat_{d,i})$ the monomials
corresponding to the pivot columns of $\redmacmat_{d,i}$. Recall that the
$f_1,\dots, f_\ell$ are homogeneous. The nonzero rows of
$\redmacmat_{d,i}$ therefore form the elements of degree $d$ of a Gröbner basis
for $\genby{f_1,\dots,f_i}$. For an integer $D\ge 0$, a set $G$ is called a
$D$\textit{-Gröbner basis} for $\genby{F}$ if for all elements $f\in\genby{F}$
of degree at most $D$, $\ltpot(f)\in\ltpot(\genby{G})$. Thus, a $D$-Gröbner
basis for $\module=\genby{F}$ is obtained by computing $\redmacmat_{d,\ell}$ for all $d_1\le d\le D$.  Note that when $t=1$,
$f_1,\dots, f_\ell$ are polynomials, and $\module=\genby{F}$ is simply a
homogeneous ideal of $\ring$, whence the rows of $\redmacmat_{d,i}$ form the
elements of degree $d$ of a traditional Gröbner basis for $\genby{f_1,\dots,
f_i}$.

\subsubsection{The syzygy criterion} 

When there are syzygies amongst $\bm{f} = (f_1,\dots,
  f_\ell)$, the Macaulay matrices $\macmat_{d,i}$ do not have full rank. With
prior knowledge of these syzygies, the matrix-$F_5$ algorithm can avoid rows
which reduce to zero when computing $\redmacmat_{d,i}$ from $\macmat_{d,i}$.

\begin{proposition}[Syzygy Criterion, {\cite[Lem.~6.4]{EderFaugere2016}}]\label{prop:syz-crit}
	Let $s=(s_1,\dots, s_\ell)$ be a homogeneous syzygy of $\bm{f}$ and $\ltpot(s)=\tau e_i$. Then 
  \begin{enumerate}[leftmargin=0.5cm]
		\item \label{it:syz-crit-1} The row of $\macmat_{\deg\tau+d_i,
			i}$ with signature $(i,\tau)$ is a linear combination
			of rows of $\macmat_{\deg\tau+d_i,i}$ of smaller
			signature.
		\item \label{it:syz-crit-2}
			For any monomial $\sigma\in\ring$, the row of
			$\macmat_{\deg\tau+\deg\sigma+d_i,i}$ 
			with signature $(i,\sigma\tau)$ is a linear
			combination of rows of
			$\macmat_{\deg\tau+\deg\sigma+d_i,i}$ of smaller
			signature.
	\end{enumerate}
\end{proposition}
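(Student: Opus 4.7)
The plan is first to unpack the condition $\ltpot(s) = \tau e_i$ coordinate by coordinate, and then to use the syzygy relation itself to rewrite the row $\tau f_i$ as a combination of rows of smaller signature in $\macmat_{\deg\tau+d_i,i}$.

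Concretely, because $\ordpot$ is position-over-term and a larger position index dominates, the condition $\ltpot(s)=\tau e_i$ forces $s_{i+1}=\cdots=s_\ell=0$ and the leading monomial of $s_i$ to be $\tau$. Writing $s_i=\tau + r$ where every monomial of $r$ is strictly less than $\tau$ for $\ord$, the syzygy relation $\sum_{j=1}^{\ell} s_j f_j = 0$ rearranges to
\[
\tau f_i \;=\; -\,r\,f_i \;-\; \sum_{j=1}^{i-1} s_j f_j .
\]
Since $s$ is homogeneous, all nonzero products $s_j f_j$ share a common degree $D$; comparing degrees in the component $i$ forces $D = \deg\tau + d_i$. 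Expanding each $s_j$ (and $r$) monomial by monomial, every term on the right-hand side is a scalar multiple of some row $\tau' f_j$ of $\macmat_{D,i}$ with signature either $(j,\tau')$ for $j<i$, or $(i,\tau')$ with $\tau'\ord\tau$; in both cases the signature is strictly $\ordsig$-smaller than $(i,\tau)$. This proves part~\ref{it:syz-crit-1}.

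For part~\ref{it:syz-crit-2} I would simply observe that, since the first syzygy module is an $\ring$-module, $\sigma s = (\sigma s_1,\ldots,\sigma s_\ell)$ is again a homogeneous syzygy of $\bm{f}$. Because the monomial ordering $\ord$ is compatible with multiplication by a monomial, $\ltpot(\sigma s) = \sigma\tau\, e_i$. Applying part~\ref{it:syz-crit-1} to the syzygy $\sigma s$ in degree $\deg\sigma+\deg\tau+d_i$ then immediately yields the claimed decomposition of the row with signature $(i,\sigma\tau)$ as a linear combination of rows of $\macmat_{\deg\sigma+\deg\tau+d_i,i}$ with strictly $\ordsig$-smaller signatures.

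The only real thing to watch is bookkeeping: one must verify that every term produced when rewriting $\tau f_i$ genuinely appears as a row of $\macmat_{D,i}$, i.e.\ that each monomial multiple $\tau' f_j$ occurring has $\deg(\tau') + d_j = D$ and that the index $j$ satisfies $j\le i$. Both facts are direct consequences of the homogeneity of $s$ and of the vanishing of $s_{i+1},\ldots,s_\ell$, so no essential difficulty remains beyond these observations.
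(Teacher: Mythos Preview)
Your proof is correct and follows essentially the same approach as the paper: for part~(i) you rearrange the syzygy relation exactly as the paper does, and your bookkeeping about the vanishing of $s_{i+1},\ldots,s_\ell$ is in fact more explicit than the paper's. For part~(ii) the paper multiplies the linear combination obtained in~(i) by $\sigma$, whereas you multiply the syzygy itself by $\sigma$ and reapply~(i); these are two phrasings of the same idea and both are valid.
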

\begin{proof}
  We have $\tau f_i=-\sum_{j\ne
  i}s_jf_j-f_i(s_i-\ltpot(s))$.  The module element $\tau f_i$ corresponds to the
  row of $\macmat_{\deg\tau+d_i,i}$ with signature $(i,\tau)$, while $\sum_{j\ne
  i}s_jf_j-f_i(s_i-\ltpot(s))$ is a \(\field\)-linear combination of other rows
  of $\macmat_{\deg\tau+d_i, i}$.  This proves \cref{it:syz-crit-1}.
  
  Suppose now that the row with signature $(i,\tau)$ of
  $\redmacmat_{\deg\tau+d_i,i}$ is a zero row. Then the polynomial $\tau f_i$ is
  a $\Bbbk$-linear combination of rows of $\macmat_{\deg\tau+d_i,i}$ with smaller
  signature, i.e.,
  \[
    \tau f_i
    =
    \sum_{(i',\tau') \ordsig (i,\tau)} c_{(i',\tau')} \tau' f_{i'}
    \text{ for some } c_{(i',\tau')} \in\Bbbk
    .
  \] 
  We can write $\sigma\tau f_i = \sum_{(i',\tau') \ordsig (i,\tau)}
  c_{(i',\tau')} \sigma \tau' f_{i'}$, for any monomial $\sigma$ in \(\ring\).
  Hence, the row with signature $(i,\sigma\tau)$ of
  $\macmat_{\deg\tau+\deg\sigma+d_i,i}$ is a \(\field\)-linear combination of
  rows with smaller signature. %This proves \cref{it:syz-crit-2}.
\end{proof}
  
If $t=1$, the Koszul syzygies $f_jf_i-f_if_j=0$ for all $1\le i,j\le\ell$ always
exist, and produce linear dependencies between the rows of the Macaulay
matrices. The matrix-$F_5$ algorithm works by interpreting these syzygies in
this way to predict the signatures of rows which will reduce to zero when
computing $\redmacmat_{d,i}$ from $\macmat_{d,i}$, and avoiding such rows
altogether. Succinctly, this algorithm utilizes the following criterion, which
is a specialization of \cref{prop:syz-crit}.

\begin{proposition}[$F_5$ Criterion, {\cite[Thm.\,1]{Faugere2002}}]\label{prop:f5-crit}
	The rows with signature $(i,\tau)$ of
	$\macmat_{d,i}$ reduce to zero in $\redmacmat_{d,i}$, for all
	$\tau\in\lt(\redmacmat_{d-d_i,i-1})$.
\end{proposition}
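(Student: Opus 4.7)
My plan is to derive this from \cref{prop:syz-crit}~\cref{it:syz-crit-1} by exhibiting, for each $\tau\in\lt(\redmacmat_{d-d_i,i-1})$, an explicit homogeneous syzygy of $\bm{f}=(f_1,\dots,f_\ell)$ whose $\ltpot$ is $\tau e_i$. Once such a syzygy is in hand, the criterion yields exactly the required conclusion at degree $\deg\tau + d_i = d$.

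To build the syzygy I would first unfold the hypothesis: by definition of $\lt(\redmacmat_{d-d_i,i-1})$, there exists a homogeneous polynomial $p\in\genby{f_1,\dots,f_{i-1}}$ of degree $d-d_i$ with $\lt(p)=\tau$ (rescaling so the leading coefficient is $1$). Writing $p = s_1 f_1 + \cdots + s_{i-1} f_{i-1}$ with each $s_j\in\ring$ homogeneous of degree $d-d_i-d_j$, I would set
\[
  s = (s_1 f_i,\; s_2 f_i,\; \dots,\; s_{i-1} f_i,\; -p,\; 0,\; \dots,\; 0) \in \ring^\ell.
\]
A short expansion gives $\sum_{j=1}^\ell s_j' f_j = f_i\cdot p - p\cdot f_i = 0$, so $s$ is a homogeneous syzygy of $\bm{f}$ of total degree $d$.

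Finally I would identify $\ltpot(s)$ under the POT convention: the $i$-th component $-p$ is nonzero, and all components of strictly higher index vanish, so the leading module term is necessarily selected from position $i$, yielding $\ltpot(s) = \tau e_i$. Applying \cref{prop:syz-crit}~\cref{it:syz-crit-1} to this syzygy then says precisely that the row of $\macmat_{d,i}$ with signature $(i,\tau)$ is a $\field$-linear combination of rows of strictly smaller signature, and therefore reduces to zero in $\redmacmat_{d,i}$. The only delicate point is keeping track of the POT convention when reading off $\ltpot(s)$; the construction of $s$ and the verification that it is a syzygy are purely routine.
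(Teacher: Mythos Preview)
Your proof is correct and follows exactly the route the paper indicates: the paper does not give a separate proof of \cref{prop:f5-crit} but simply states that it ``is a specialization of \cref{prop:syz-crit}'', and your argument spells out precisely this specialization by constructing the Koszul-type syzygy $s=(s_1 f_i,\dots,s_{i-1} f_i,-p,0,\dots,0)$ and invoking \cref{prop:syz-crit}~\cref{it:syz-crit-1}. The only small step you leave implicit is that ``linear combination of rows of smaller signature'' entails ``reduces to zero in $\redmacmat_{d,i}$'', but this is immediate from the definition of $\redmacmat_{d,i}$ via valid row operations.
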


\subsubsection{The matrix-\texorpdfstring{$F_5$}{F5} algorithm}
\label{sec:prelim:alg-mF5-syz}

When $t=1$, combining the syzygy criterion with \cref{prop:f5-crit} leads to
the matrix-$F_5$ algorithm. It works incrementally by degree and index. That
is, for a fixed degree $d$, it first computes the elements of degree $d$ of a
Gröbner basis for $(f_1)$ by reducing the matrix $\macmat_{d,1}$ to 
$\redmacmat_{d,1}$, and then builds the matrix $\macmat_{d,2}$ using
$\redmacmat_{d,1}$. Continuing in this fashion, it eventually builds and
reduces $\macmat_{d,\ell}$, yielding the elements of degree $d$ of a Gröbner
basis for the full system $F$.

In \cref{alg:mF5-syz}, we complement the description of this algorithm
from~\cite{BardetFaugereSalvy2015} by integrating \cref{it:syz-crit-2} of
\cref{prop:syz-crit}. This is important since it allows us to avoid a
significant number of reductions to zero that would occur without it. We allow
for the input of precomputed syzygies of $F$ in order to exploit
\cref{prop:syz-crit} and we allow $t\ge 1$. The termination and correction
of \cref{alg:mF5-syz} is from \cite[Thm.~9]{BardetFaugereSalvy2015}
 when $t=1$, and the same induction argument works when $t>1$.

\begin{algorithm}[ht]
	\caption{Matrix-\(F_5\)\((F,D,S)\)}
	\label{alg:mF5-syz}
  \begin{algorithmic}[1]
	  \Require{%
      A sequence $F=(f_1,\dots, f_\ell)$ of homogeneous elements of degrees $d_1\le\cdots\le d_\ell$ in \(\pring^{t}\);
      a degree bound $D$; a set $S$ of syzygies of $F$.}
    \Ensure{The reduced POT $D$-Gröbner basis for $\genby{F}$.}

    \State \InlineFor{$i\in\{1,\dots,\ell\}$}{$G_i\gets\emptyset$}
    \For{$d$ from $d_1$ to $D$}
      \State $\macmat_{d,0}\gets\emptyset$; $\mathrm{Crit}\gets\ltpot(S)$
      \For{$i$ from $1$ to $\ell$}
        \If{$d<d_i$}
          $\macmat_{d,i}\gets \macmat_{d,i-1}$
        \ElsIf{$d=d_i$}
          \State 	\hspace{-1.5cm}    $\macmat_{d,i}\gets$ concatenate the row $f_i$ to $\redmacmat_{d,i-1}$ with signature $(i,1)$
        \Else
          \State $\macmat_{d,i}\gets\redmacmat_{d,i-1}$
	  \If{$t=1$}
          \For{$\tau\in\lt(\macmat_{d-d_i,i-1})$}
          \State $\mathrm{Crit}\gets\mathrm{Crit}\cup\{(i,\tau)\}$
          \EndFor
	  \EndIf
	  \For{$f\in\rows(\redmacmat_{d-1,i})\smallsetminus\rows(\redmacmat_{d-1,i-1})$}
            \State $(i,\tau)\gets$ signature of $f$
	    \If{$f=0$}
	    	\For{$j\in\{1,\dots, k\}$}
		\State $\mathrm{Crit}\gets\mathrm{Crit}\cup\{(i,\tau\cdot x_j)\}$
		\EndFor
	    \EndIf
	  \EndFor
          \For{$f\in\rows(\macmat_{d-1,i})\smallsetminus\rows(\macmat_{d-1,i-1})$}
            \State $(i,\tau)\gets$ signature of $f$
	    \For{$j\in\{\max\{j':x_{j'}\mid\tau\},\dots,k\}$}\label{line:var-select}
              \If{$(i,\tau\cdot x_j)\notin\mathrm{Crit}$}
                \State \hspace{-3cm} $\macmat_{d,i}\gets$ concatenate the row $x_jf$ to $\macmat_{d,i}$ with signature $(i,\tau\cdot x_j)$\label{line:mac-mat-add-row}
              \EndIf
            \EndFor
          \EndFor
        \EndIf
      \State $\redmacmat_{d,i} \gets$ reduced row echelon form of $\macmat_{d,i}$ obtained via a sequence of valid elementary row operations
      \State $G_i\gets G_i\cup\{f\in\rows(\redmacmat_{d,i}):f\notin \genby{\lt(G_i)}\}$ 
    \EndFor
    \EndFor
    \State \Return $G_\ell$
  \end{algorithmic}
\end{algorithm}

\subsection{Genericity}

We take notation from \cite[Sec.~2 and~3]{FaugereSafeySpaenlehauer2013}. Fix
$n,k\in\ZZ_{>0}$. Define $\fraka=\{\fraka_t^{(i,j)}:1\le t\le k, 1\le i,j\le
n\}$. For each $1\le i,j\le n$, let
$f_{i,j}=\sum_{t=1}^{k}a_t^{(i,j)}x_t\in\field[\fraka, x_1,\dots, x_k]$. We call
$f_{i,j}$ a \textit{generic homogeneous linear form}. We denote by $\genmat$ the
matrix over $\field[\fraka,x_1,\dots,x_k]$ whose $(i,j)$ entry is $f_{i,j}$.
Next, for a fixed $\coeffa=\left(a_t^{(i,j)}\right)\in\closure^{k\cdot n^2}$, we denote by
$\varphi_{\coeffa}$ the specialization map
$\varphi_{\coeffa}:\field[\fraka,x_1,\dots, x_k]\to\closure[x_1,\dots, x_k]$
which specializes $\fraka_t^{(i,j)}$ to $a_t^{(i,j)}$. We call a map
\[
	\property:\operatorname{Ideals}(\field[\fraka, x_1,\dots, x_k])\to\{\mathtrue, \mathfalse\}.
\] 
a \textit{property}.
For an integer $1\le r<n$, we will denote by $\gendetideal$ the ideal of
$(r+1)$-minors of $\genmat$. Subsequently, a property $\property$ is called
$\gendetideal$\textit{-generic} if there exists a nonempty Zariski open subset $U$ of
$\affspace$ such that for all $\coeffa\in U$, $\property\left(
  \varphi_{\coeffa}(\gendetideal)  \right) = \mathtrue$.

An important property is the notion of Cohen-Macaulayness. Let $\ideal$ be an
ideal of $\ring$. A sequence $(f_1,\dots, f_\ell)\subseteq\ring$ is called an
$\ideal$-\textit{regular sequence} if for all $1\le i\le\ell$, $f_i$ is not a
zero-divisor in the module $\ideal/\genby{f_1,\dots, f_{i-1}}$. The ideal
$\ideal$ is called \textit{Cohen-Macaulay} if there exists an $\ideal$-regular
sequence $(f_1,\dots, f_\ell)$ such that $\ell=\dim(\ideal)$ (here
$\dim(\ideal)$ is the Krull dimension of $\ideal$ in $\ring$).

\begin{remark}
	If $(f_1,\dots, f_\ell)$ is an $\ideal$-regular sequence, then
  $\ell\le\dim(\ideal)$. Hence, Cohen-Macaulayness requires that there exists an
  $\ideal$-regular sequence of maximal possible length in $\ring$.
\end{remark}

\begin{proposition}\label{prop:cm-generic}
  Let $\CM$ be the property
	%%\[
	%%	\CM(I)=
	%%	\begin{cases}
	%%		\mathtrue &\text{if } $I$ \text{ is Cohen-Macaulay}\\
	%%		\mathfalse & \text{otherwise}
	%%	\end{cases}
	%%.\]
  \(\CM(\ideal)=\mathtrue\) if \(\ideal\) is Cohen-Macaulay and
  \(\CM(\ideal)=\mathfalse\) otherwise. Then for any $1\le r\le n-2$, $\CM$ is
  $\gendetideal$-generic.
\end{proposition}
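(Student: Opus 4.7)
The plan is to reduce the statement to the classical theorem of Hochster--Eagon (see, e.g., \cite{BrunsVetter1988}): the ideal of $(r+1)$-minors of an $n \times n$ matrix $Y = (y_{i,j})$ of $n^{2}$ distinct indeterminates is Cohen--Macaulay in $\closure[Y]$, of codimension $(n-r)^{2}$. Denote this ideal by $\ideal_{Y}$ and set $A := \closure[Y]/\ideal_{Y}$. Since $\varphi_{\coeffa}(\gendetideal)$ is the ideal generated in $\closure[x_{1},\dots,x_{k}]$ by the image of $\ideal_{Y}$ under the substitution homomorphism $\psi_{\coeffa} : \closure[Y] \to \closure[x_{1},\dots,x_{k}]$ sending $y_{i,j} \mapsto f_{i,j}$, my strategy is to exhibit $\closure[x_{1},\dots,x_{k}]/\varphi_{\coeffa}(\gendetideal)$ as either a quotient by generic linear forms, or a polynomial extension, of $A$, and then invoke the stability of Cohen--Macaulayness under these operations.

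I would split into two cases according to the comparison of $k$ and $n^{2}$. When $k \le n^{2}$, for generic $\coeffa$ the map $\psi_{\coeffa}$ is surjective, with kernel generated by $n^{2}-k$ linear forms $g_{1},\dots,g_{n^{2}-k} \in \closure[Y]_{1}$, so that $\closure[x_{1},\dots,x_{k}]/\varphi_{\coeffa}(\gendetideal) \cong A/(g_{1},\dots,g_{n^{2}-k})A$. When $k \ge n^{2}$, for generic $\coeffa$ the linear forms $f_{i,j}$ are algebraically independent in $\closure[x_{1},\dots,x_{k}]$, so $\psi_{\coeffa}$ is injective, and $\closure[x_{1},\dots,x_{k}]$ is a polynomial ring in $k - n^{2}$ extra variables over $\psi_{\coeffa}(\closure[Y]) \cong \closure[Y]$, yielding $\closure[x_{1},\dots,x_{k}]/\varphi_{\coeffa}(\gendetideal) \cong A[z_{1},\dots,z_{k-n^{2}}]$. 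In the polynomial-extension case, Cohen--Macaulayness is immediate. In the linear-section case, I would invoke the standard fact that in a graded Cohen--Macaulay algebra over an infinite field, a generic tuple of linear forms in $A_{1}$ forms a regular sequence up to the Krull dimension of $A$ (and any further generic linear forms leave the resulting Artinian quotient Cohen--Macaulay), so that $A/(g_{1},\dots,g_{n^{2}-k})A$ is Cohen--Macaulay for a generic choice of the $g_{i}$.

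The main obstacle is translating these genericity conditions into Zariski open conditions on $\coeffa \in \affspace$, rather than merely on the ambient spaces of linear forms. For this, I would verify that the algebraic map sending a $\coeffa$ (with maximal-rank coefficient matrix) to the kernel generators $(g_{1},\dots,g_{n^{2}-k})$ (respectively, to the tuple $(f_{i,j})$) is dominant onto its target space of tuples in $\closure[Y]_{1}$ (respectively, in $\closure[x_{1},\dots,x_{k}]_{1}$); the preimage of the appropriate Zariski open subset of the target is then a nonempty Zariski open subset of $\affspace$. Each of the individual conditions---maximality of rank, regularity of the generic linear sequence in $A$, algebraic independence of the $f_{i,j}$---can be written as the non-vanishing of an explicit polynomial in the $a_{t}^{(i,j)}$'s, so their finite intersection yields the desired nonempty Zariski open subset $U$ on which $\CM(\varphi_{\coeffa}(\gendetideal)) = \mathtrue$.
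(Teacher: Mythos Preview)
Your proposal is correct and rests on the same two ingredients as the paper's proof: the Hochster--Eagon theorem that the ideal of $(r+1)$-minors of a matrix of indeterminates is Cohen--Macaulay (cited in the paper as \cite[Thm.~2.5]{BrunsVetter1988}), followed by a genericity transfer to matrices of linear forms. The only difference is in how the transfer is handled: the paper dispatches it in a single line by invoking \cite[Lem.~3]{FaugereSafeySpaenlehauer2013}, whereas you unpack that step by hand via the case split $k \le n^{2}$ versus $k \ge n^{2}$ and the linear-section / polynomial-extension argument. Your route is more self-contained and makes the Zariski-open conditions explicit (rank conditions on the coefficient matrix, regularity of a generic linear system of parameters on the graded Cohen--Macaulay ring $A$), at the cost of some length and some care needed to make the ``map to kernel generators'' argument rigorous (e.g.\ by passing through the Grassmannian rather than choosing generators); the paper's route is a two-sentence proof but defers the substance to the cited lemma.
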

\begin{proof}
	Let $U$ be an $n\times n$ matrix of indeterminates;
  $\detidealGen{U}{r}$ is Cohen-Macaulay \cite[Thm.~2.5]{BrunsVetter1988};
  \cite[Lem.~3]{FaugereSafeySpaenlehauer2013} ends the proof.
\end{proof}

\section{Syzygies of determinantal ideals}
\label{sec:syz}

Here, we focus on the syzygies between the minors \(\detsystem\) of order $r+1$
of $M$. The module $\Syz(\detsystem)$
is known to be generated by syzygies between minors of order $r+1$ of submatrices of $M$
of size $(r+2)\times (r+2)$ \cite[Thm.\,5.1]{kurano1989}. This allows us to reduce the problem of computing
generators for $\Syz(\detsystem)$ from the general case to the case $r=n-2$.
The Gulliksen-Neg{\aa}rd complex \cite{GulliksenNegard1972,BrunsVetter1988} is
a free resolution of $\detidealCorkOne$. We will exploit this complex to obtain
$\Syz(\detsystem)$ first when $r=n-2$, then in full generality. 

\subsection{The Gulliksen-Neg{\aa}rd complex}\label{subsec:gulliksen-negard}

The Gulliksen-Neg{\aa}rd complex is a free resolution of $\detidealCorkOne$,
\[
	0\to \emodule_3\xrightarrow{d_3} \emodule_2\xrightarrow{d_2} \emodule_1\xrightarrow{d_1} \emodule_0\xrightarrow{\epsilon} \detidealCorkOne\to 0
  .
\]
As such, we can use \cref{prop:syzygy-resolution} to compute the first syzygy
module of the set of generators \(\detsystemCorkOne\) as the kernel of the
augmentation map $\epsilon$ of this complex. We recall
the construction of the complex here; details and proofs can be found in
\cite[2.D]{BrunsVetter1988}.

We denote by $\matmod$ the set of $n\times n$ matrices over $\ring$, with the
structure of a free $\ring$-module of rank $n^2$. We will denote by $\matringbasis{i}{j}$
the standard $(i,j)$-th basis matrix of $\matmod$. In this section we will take
as generators for $\detidealCorkOne$ the cofactors of $M$. To that
end, let $\cofac{M} = (\cofac{M}_{i,j})_{i,j} \in \matmod$ be the matrix of
these cofactors.

\subsubsection{The modules}

We begin by defining the component modules $\emodule_3$, $\emodule_2$,
$\emodule_1$, $\emodule_0$. Let $\emodule_0=\matmod$. Consider the sequence
\[
	\ring
  \xrightarrow{\iota} \matmod \oplus \matmod
  \xrightarrow{\pi} \ring
\] 
with $\iota(a)=(aI_n, aI_n)$, where $I_n$ is the identity matrix in $\matmod$
and $\pi(X,Y) = \trace(X-Y)$ is the trace of $X-Y$. The module $\ker(\pi)$ is
generated by the union of the following sets:
\begin{itemize}
  \item \(\{(0,\matringbasis{i}{j})\in \matmod\oplus\matmod : 1\le i,j\le n, i\ne j\}\),
  \item \(\{(\matringbasis{i}{j},0)\in \matmod\oplus\matmod : 1\le i,j\le n, i\ne j\}\),
  \item \(\{(\matringbasis{i}{i},\matringbasis{1}{1})\in \matmod\oplus \matmod : 1\le i\le n\}\), and
  \item \(\{(0, \matringbasis{i}{i}-\matringbasis{1}{1})\in \matmod\oplus \matmod : 2\le i\le n\}\).
\end{itemize}

On the other hand, $\im(\iota)$ is generated by
\[
  (I_n,I_n) 
  %% = \textstyle\sum_{i=1}^{n}(E_{i,i},E_{i,i})
  = (\matringbasis{1}{1},\matringbasis{1}{1})
  + \textstyle\sum_{i=2}^{n} (\matringbasis{i}{i},\matringbasis{1}{1})
  + \textstyle\sum_{i=2}^{n} (0,\matringbasis{i}{i}-\matringbasis{1}{1})
  .
\]
This shows that $\emodule_1=\ker(\pi)/\im(\iota)$ is a free module. Finally, let
$\emodule_2=\matmod$ and $\emodule_3=\ring$.

\subsubsection{The maps}

We next define the maps $d_1$, $d_2$, $d_3$, $\epsilon$, as follows:
\begin{itemize}
  \item \(\epsilon: \emodule_0 \to \detidealCorkOne, \;\;	N \mapsto \trace(\cofac{M}N)\),
  \item \(d_1: \emodule_1 \to \emodule_0, \;\; \overline{(N_1,N_2)} \mapsto N_1M-MN_2\),
  \item \(d_2: \emodule_2 \to \emodule_1, \;\; N \mapsto \overline{(MN, NM)}\), and
  \item \(d_3: \emodule_3 \to \emodule_2, \;\; x \mapsto x\cofac{M}\),
\end{itemize}
where for $(N_1,N_2)\in \matmod\oplus \matmod$, we denote by
$\overline{(N_1,N_2)}$ its image under the canonical surjection $\matmod\oplus
\matmod\twoheadrightarrow \emodule_1$. 

\begin{proposition}\label{prop:gn-free-resolution}
	Let $M$ be a matrix of homogeneous linear forms in $\ring$.  Assume
	$\detidealCorkOne$ is Cohen-Macaulay. With
	\[
	\emodule_0,\emodule_1,\emodule_2,\emodule_3,\epsilon,d_1,d_2,d_3
	.\] 
	as defined above, the sequence
	\[
		0\to \emodule_3\xrightarrow{d_3} \emodule_2\xrightarrow{d_2} \emodule_1\xrightarrow{d_1} \emodule_0\xrightarrow{\epsilon} \detidealCorkOne\to 0
	\] 
	is a free resolution of $\detidealCorkOne$.
\end{proposition}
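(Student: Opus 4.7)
The plan is to reduce exactness of the Gulliksen-Neg{\aa}rd complex to a classical acyclicity result from \cite[2.D]{BrunsVetter1988}, showing that the Cohen-Macaulay hypothesis is exactly what supplies the missing grade condition. First, verify that the sequence is a complex by checking that consecutive maps compose to zero. The key identity is the cofactor relation $M\cofac{M} = \cofac{M}M = \det(M)\, I_n$. Concretely: $d_2\circ d_3$ vanishes because $d_2(x\cofac{M}) = \overline{(xM\cofac{M}, x\cofac{M}M)} = \overline{(x\det(M)I_n, x\det(M)I_n)} = \overline{\iota(x\det(M))} = 0$ in $\emodule_1$; $d_1\circ d_2$ is trivial, since $d_1(\overline{(MN,NM)}) = MNM-MNM = 0$; and for any $(N_1,N_2)\in\ker(\pi)$, \(\epsilon(d_1(\overline{(N_1,N_2)})) = \trace(\cofac{M}(N_1M-MN_2)) = \det(M)(\trace(N_1)-\trace(N_2)) = 0\).

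Next, establish exactness by applying the Buchsbaum--Eisenbud acyclicity criterion, or equivalently quoting directly Bruns--Vetter's theorem on the Gulliksen-Neg{\aa}rd complex. That criterion reduces exactness to two conditions: (i) the ranks of the successive differentials sum to the ranks of the free modules, which one verifies at the generic point using the cofactor identity and the explicit shape of the $d_j$'s; and (ii) the ideal of $r_j$-minors of $d_j$ has grade at least $j$ for each $j\ge 1$. Condition (ii) is controlled by the grade of $\detidealCorkOne$ itself, since $\epsilon$ surjects onto this ideal and the higher differentials assemble into a presentation of its syzygies.

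The main obstacle is therefore to verify that $\mathrm{grade}(\detidealCorkOne) = 4$. Over the polynomial ring $\ring$, grade coincides with height. By the classical codimension bound for ideals of $(r+1)$-minors of an $n\times n$ matrix (Macaulay's bound), $\mathrm{codim}(\detidealCorkOne) \le (n-(n-2))^{2} = 4$, with equality precisely when $\ring/\detidealCorkOne$ is Cohen-Macaulay. The hypothesis of the proposition therefore yields $\mathrm{grade}(\detidealCorkOne) = 4$, which is exactly the maximal value making the Buchsbaum--Eisenbud conditions hold for all $j\in\{1,2,3\}$. One can then either invoke \cite[Thm.\,2.16]{BrunsVetter1988} directly, or argue by specialization: the complex is known to be acyclic over $\ZZ[\fraka][x_1,\dots,x_k]$ with $\genmat$ in place of $M$ (the generic case treated by Gulliksen--Neg{\aa}rd), and the Cohen-Macaulay hypothesis ensures that depth is preserved under the specialization $\varphi_{\boldsymbol{a}}$, so acyclicity descends to the present $M$. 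Combined with the complex property verified above, this gives the claimed free resolution.
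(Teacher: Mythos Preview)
Your approach is essentially the same as the paper's: both arguments use the Cohen--Macaulay hypothesis to obtain $\mathrm{grade}(\detidealCorkOne)=4$ and then invoke the relevant result from Bruns--Vetter (the paper cites \cite[Thm.~2.26]{BrunsVetter1988}, together with \cite[Thm.~10]{FaugereSafeySpaenlehauer2013} for the dimension count) to conclude exactness of the Gulliksen--Neg{\aa}rd complex. Your write-up is more explicit---you verify the complex property by hand via the cofactor identity and name the Buchsbaum--Eisenbud acyclicity criterion as the mechanism behind the cited theorem---but the logical skeleton is identical.
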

\begin{proof}
	Since $\mathcal{I} = \detidealCorkOne$ is Cohen-Macaulay, there exists an
  $\mathcal{I}$-regular sequence of length equal to the Krull dimension of
  $\mathcal{I}$ in $\ring$. By \cite[Thm.~10]{FaugereSafeySpaenlehauer2013} and
  \cref{prop:cm-generic}, the Krull dimension of $\mathcal{I}$ is exactly $4$.
  Then, the result follows from \cite[Thm.~2.26]{BrunsVetter1988}.
\end{proof}

\subsection{The case \texorpdfstring{$r=n-2$}{r = n-2}}

We give generators for the first syzygy module in the case $r=n-2$, assuming $\detidealCorkOne$ is Cohen-Macaulay.

\begin{theorem}\label{thm:syz-corank-1} 
	Let $M=(\matentry{i}{j})$ be a matrix of homogeneous linear forms in
	$\ring$. Suppose that $\detidealCorkOne$ is Cohen-Macaulay. Then the
	first syzygy module of $F_{n-2}(M)$ is generated by:
  \begin{enumerate}[(i)]
		\item \label{it:thm:syz-corank-1:type1}
      $\sum_{k=1}^{n}(-1)^{k+j}\matentry{k}{i}\matringbasis{k}{j}$
      for $i\ne j$;
		% \[
		% 	\sum_{k=1}^{n}(-1)^{k+j}\matentry{k}{i}\matringbasis{k}{j}
		% .\] 	
		\item \label{it:thm:syz-corank-1:type2}
      $\sum_{k=1}^{n}(-1)^{i+k}\matentry{j}{k}\matringbasis{i}{k}$
      for $i\ne j$;
			% \[
			% 	\sum_{k=1}^{n}(-1)^{i+k}\matentry{j}{k}\matringbasis{i}{k}
			% .\] 
    \item \label{it:thm:syz-corank-1:type3} 
      $\sum_{k=1}^{n}((-1)^{i+k}\matentry{k}{i}\matringbasis{k}{i}-(-1)^{k+1}\matentry{1}{k}\matringbasis{1}{k})$
      for $1\le i\le n-1$;
			% \[
			% \sum_{k=1}^{n}\left((-1)^{i+k}\matentry{k}{i}\matringbasis{k}{i}-(-1)^{k+1}\matentry{1}{k}\matringbasis{1}{k}\right)
			% .\] 
		\item \label{it:thm:syz-corank-1:type4}
      $\sum_{k=1}^{n}((-1)^{j+k}\matentry{j}{k}\matringbasis{j}{k}-(-1)^{k+1}\matentry{1}{k}\matringbasis{1}{k})$
      for $2\le j\le n$.
			% \[
			% \sum_{k=1}^{n}\left((-1)^{j+k}\matentry{j}{k}\matringbasis{j}{k}-(-1)^{k+1}\matentry{1}{k}\matringbasis{1}{k}\right)
			% .\] 
	\end{enumerate}
	Furthermore, the syzygies described by Items {\em (i), (ii), (iii)} and {\em (iv)}
	% \cref{it:thm:syz-corank-1:type1}, \cref{it:thm:syz-corank-1:type2},
	% \cref{it:thm:syz-corank-1:type3}, \cref{it:thm:syz-corank-1:type4}
	form a minimal generating set for the $\Syz(F_{n-2}(M))$ of size
	$2n^2-2$.
\end{theorem}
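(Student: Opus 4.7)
The plan is to exploit the Gulliksen-Neg{\aa}rd complex as a free resolution of $\detidealCorkOne$. Under the Cohen-Macaulay hypothesis, \cref{prop:gn-free-resolution} guarantees that this complex is exact, so \cref{prop:syzygy-resolution} identifies $\ker(\epsilon) = \im(d_1)$ with the first syzygy module of the image $\{\epsilon(\matringbasis{k}{\ell})\}_{k,\ell}$ of the standard basis of $\emodule_0 = \matmod$. A direct computation gives $\epsilon(\matringbasis{k}{\ell}) = \trace(\cofac{M}\matringbasis{k}{\ell}) = \cofac{M}_{\ell,k}$, which is, up to a sign $(-1)^{k+\ell}$, the unsigned $(n-1)$-minor obtained by deleting row $k$ and column $\ell$ of $M$, that is, one of the elements of $\detsystemCorkOne$. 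Rescaling the standard basis of $\emodule_0$ by these signs produces a canonical isomorphism between $\ker(\epsilon)$ and $\Syz(\detsystemCorkOne)$.

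I would then apply $d_1$ to each of the four families of generators of $\emodule_1 = \ker(\pi)/\im(\iota)$ recalled in \cref{subsec:gulliksen-negard}. Using the identities $\matringbasis{i}{j} M = \sum_\ell M_{j,\ell} \matringbasis{i}{\ell}$ and $M\matringbasis{i}{j} = \sum_k M_{k,i} \matringbasis{k}{j}$ together with the sign rescaling above, the images of $(0,\matringbasis{i}{j})$ and $(\matringbasis{i}{j},0)$ (for $i\ne j$) yield precisely the ``false Laplace expansions'' along a column or along a row, which are the syzygies of types \textit{(i)} and \textit{(ii)} respectively. The images of $(\matringbasis{i}{i},\matringbasis{1}{1})$ encode the equality of the Laplace expansions of $\det(M)$ along row $i$ and along column $1$, while the images of $(0,\matringbasis{i}{i}-\matringbasis{1}{1})$ encode the equality of the Laplace expansions along columns $i$ and $1$. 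A short linear-algebra check shows that, modulo the single relation coming from $(I_n,I_n)\in\im(\iota)$, the syzygies \textit{(iii)} ($=$ column $i$ minus row $1$) and \textit{(iv)} ($=$ row $j$ minus row $1$) are $\ZZ$-linear combinations of these, and conversely.

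Counting then gives $2n(n-1)$ syzygies of types \textit{(i)}--\textit{(ii)} and $2(n-1)$ syzygies of types \textit{(iii)}--\textit{(iv)}, for a total of $2n^2-2 = \rk(\emodule_1)$. Minimality then follows from the observation that, since the entries of $M$ are homogeneous linear forms, every entry of every map in the Gulliksen-Neg{\aa}rd complex (that is, of $\epsilon,d_1,d_2,d_3$) lies in the homogeneous maximal ideal $\frakm = \langle x_1,\dots,x_k\rangle$; hence the resolution is minimal, and $\rk(\emodule_1)$ equals the minimal number of generators of $\Syz(\detsystemCorkOne)$. The main obstacle I expect is the careful sign bookkeeping: reconciling the fact that $\cofac{M}$ plays the role of the adjugate (so that $M\cofac{M}=\cofac{M}M=\det(M)I$ forces $\epsilon\circ d_1=0$) with the statement in terms of the unsigned minors, and verifying that the specific symmetric combinations chosen for types \textit{(iii)}--\textit{(iv)} span the same submodule of $\emodule_1$ as the direct images of $d_1$ on the $(\matringbasis{i}{i},\matringbasis{1}{1})$ and $(0,\matringbasis{i}{i}-\matringbasis{1}{1})$ generators.
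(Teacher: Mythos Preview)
Your proposal is correct and, for the generation part, follows essentially the same route as the paper: invoke exactness of the Gulliksen--Neg{\aa}rd complex (\cref{prop:gn-free-resolution}) to identify $\Syz(\detsystemCorkOne)$ with $\im(d_1)$, evaluate $d_1$ on the explicit basis of $\emodule_1$ from \cref{subsec:gulliksen-negard}, and pass from cofactors to unsigned minors via the sign isomorphism $\cofac{M}_{i,j}\mapsto (-1)^{i+j}\cofac{M}_{i,j}$. One small simplification: the extra ``linear-algebra check'' you anticipate for types \textit{(iii)}--\textit{(iv)} is not actually needed in the paper's setup. With the paper's convention for $d_1$, the images of $\overline{(\matringbasis{i}{i},\matringbasis{1}{1})}$ and $\overline{(0,\matringbasis{j}{j}-\matringbasis{1}{1})}$ already produce \textit{(iii)} and \textit{(iv)} directly after the sign pullback, with the index restriction $1\le i\le n-1$ coming precisely from dropping one generator via the relation $(I_n,I_n)\in\im(\iota)$.

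The genuine difference lies in the minimality argument. The paper argues by hand: if some $m_i$ were an $\ring$-combination of the others, homogeneity forces the coefficients to lie in $\Bbbk$, hence the relation sits in $\im(d_2)\cap\Bbbk^{2n^2-2}$; pulling back to some $N\in\matmod$, the matrices $MN$ and $NM$ would then have constant entries, which must all vanish because each such entry lies in an ideal generated by homogeneous linear forms. Your argument---that every differential $d_1,d_2,d_3$ has all its entries in $\frakm=\langle x_1,\dots,x_k\rangle$, so the graded free resolution is minimal and therefore $\rk(\emodule_1)=2n^2-2$ is the minimal number of generators of $\ker(\epsilon)$---is cleaner and entirely standard. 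Both are valid; yours is shorter and more conceptual, while the paper's version makes the obstruction explicit (and is essentially the degree-zero case of the same observation about $d_2$).
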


\begin{proof}
  By \cref{prop:syzygy-resolution}, $\ker(\epsilon)$ is the first syzygy module
  of the cofactors of $M$. By \cref{prop:gn-free-resolution}, since
  $\detidealCorkOne$ is Cohen-Macaulay, the Gulliksen-Neg{\aa}rd complex is
  exact and $\ker(\epsilon)=\im(d_1)$. The image $\im(d_1)$ is generated by the
  images of generators for $\emodule_1$ under $d_{1}$. Thus, by
  \cref{subsec:gulliksen-negard}, the first syzygy module of $F_{n-2}(M)$ is
  generated by the following syzygies. For $i\ne j$,
\begin{equation}\label{eq:gn-syz-1}
	d_1\!\left(\overline{(\matringbasis{i}{j}, 0)}\right)=\matringbasis{i}{j}M=\textstyle\sum\limits_{k=1}^{n}\matentry{k}{i}\matringbasis{k}{j}
.\end{equation} 
Similarly, for $i\ne j$,
\begin{equation}\label{eq:gn-syz-2}
	d_1\!\left(\overline{(0,\matringbasis{i}{j})}\right)=M\matringbasis{i}{j}=\textstyle\sum\limits_{k=1}^{n}\matentry{j}{k}\matringbasis{i}{k}
.\end{equation} 
For any $1\le i\le n-1$,
\begin{equation}\label{eq:gn-syz-3}
	d_1\!\left(\overline{(\matringbasis{i}{i},\matringbasis{1}{1})}\right)=E_{i,i}M-M\matringbasis{1}{1}=\textstyle\sum\limits_{k=1}^{n}\matentry{k}{i}\matringbasis{k}{i}-\matentry{1}{k}\matringbasis{1}{k}
.\end{equation} 
Finally, for any $2\le j\le n$, 
\begin{equation}\label{eq:gn-syz-4}
	d_1\!\left(\overline{(0,\matringbasis{j}{j}-\matringbasis{1}{1})}\right)=M\matringbasis{j}{j}-M\matringbasis{1}{1}=\textstyle\sum\limits_{k=1}^{n}\matentry{j}{k}\matringbasis{j}{k}-\matentry{1}{k}\matringbasis{1}{k}
  .\end{equation}
Since the generators for $\detidealCorkOne$ taken in the
Gulliksen-Neg{\aa}rd complex are the cofactors of $M$ rather than the
$(n-1)$-minors of $M$, we obtain \crefrange{it:thm:syz-corank-1:type1}{it:thm:syz-corank-1:type4}
by pulling back each of \crefrange{eq:gn-syz-1}{eq:gn-syz-4},
 respectively under
the isomorphism $\cofac{M}_{i,j} \in \detidealCorkOne\mapsto (-1)^{(i+j)}\cofac{M}_{i,j}\in \detidealCorkOne$.
There are $n^2-n$ syzygies described by each of
\cref{it:thm:syz-corank-1:type1} and \cref{it:thm:syz-corank-1:type2}, and $n-1$
syzygies described by each of \cref{it:thm:syz-corank-1:type3} and
\cref{it:thm:syz-corank-1:type4}. This gives a total of $2n^2-2$ syzygies.

We conclude by proving that these $2n^2-2$ syzygies form a minimal generating
set for $\Syz(F_{n-2}(M))$. Let $m_1,\dots, m_{2n^2-2}\in \Syz(F_{n-2}(M))$
denote the generating set given by \cref{it:thm:syz-corank-1:type1},
\cref{it:thm:syz-corank-1:type2}, \cref{it:thm:syz-corank-1:type3},
\cref{it:thm:syz-corank-1:type4}. Suppose that for some $1\le i\le 2n^2-2$,
$m_i$ is generated by $\{m_1,\dots, m_{2n^2-2}\}\smallsetminus\{m_i\}$. Then we
can write  $\sum_{j\ne i}a_jm_j=m_i$ for some $a_j\in\ring$. Since the $m_j$ are all homogeneous, this
forces $a_j\in\Bbbk$ for all $j\ne i$. Subsequently,
$a_j\in\im(d_2)\cap\Bbbk^{2n^2-2}$ for all $j\ne i$. \\
Letting $a_i=1$, and taking $N\in d_2^{-1}((a_1,\dots, a_{2n^2-2}))$, we find
that $MN,NM\in\matmod$ are matrices with entries in $\Bbbk$. For $1\le j\le n$,
the entries of the $j$-th row of $MN$ are members of the ideal generated by the
$j$-th row of $M$. The entries of $M$ are homogeneous linear forms, so the only
constant element contained in this ideal $0$. Similarly, for $1\le j\le
n$, the entries of the $j$-th row of $NM$ are members of the ideal generated by
the $j$-th column of $M$, and an analogous argument applies.  Thus, $a_j=0$ for
each $1\le j\le 2n^2-2$.
\end{proof}

One can easily construct an algorithm, named $\operatorname{SyzCorankOne}$,
which, given a matrix $M$, computes the syzygies described in
\cref{thm:syz-corank-1}.

\begin{remark}
	In both \cref{thm:syz-corank-1} and
	\ifnum\ALGS=1{\cref{alg:syzgen-corank-one}}\else{$\operatorname{SyzCorankOne}(M)$}\fi\
	we require that $\detsystemCorkOne$ is Cohen-Macaulay. This is
	necessary, as without it the Gulliksen-Neg{\aa}rd complex need not be
	exact and subsequently we cannot compute $\Syz(\detsystemCorkOne)$
	using its differential maps. However, since $\epsilon$ is defined by
	$\epsilon(N)=\trace(\cofac{M}N)$, where $\cofac{M}=(\cofac{M}_{i,j})$
	is the matrix of cofactors of $M$, a matrix $N=(N_{i,j})\in\matmod$ is
	in the kernel of $\epsilon$ if and only if $\sum_{1\le i,j\le
	n}N_{j,i}\cofac{M}_{i,j}=0$. That is, $\ker(\epsilon)$ corresponds to
	$\Syz(\detsystemCorkOne)$ even if $\detidealCorkOne$ is not
	Cohen-Macaulay. Moreover, even if $\detidealCorkOne$ is not
	Cohen-Macaulay, the Gulliksen-Neg{\aa}rd complex is still a complex.
	Thus, in all cases, $\im(d_1)\subseteq\ker\epsilon$, so if
	$\detidealCorkOne$ is not Cohen-Macaulay, \cref{thm:syz-corank-1}
	describes (and subsequently
	\ifnum\ALGS=1{\cref{alg:syzgen-corank-one}}\else{$\operatorname{SyzCorankOne}(M)$}\fi\
	computes) a generating set for a submodule of
	$\Syz(\detsystemCorkOne)$.
\end{remark}

\begin{remark}
	If the entries of $M$ are not homogeneous, then assuming 
  $\detidealCorkOne$ is Cohen-Macaulay, the syzygies computed by
  \cref{thm:syz-corank-1} still generate $\Syz(\detsystemCorkOne)$, but they
  need no longer be a \textit{minimal} generating set.
\end{remark}

\subsection{The general case}

%\cref{thm:syz-corank-1} generalizes in the following way:

\begin{theorem}\label{thm:syz-general}
	Let $n\ge 3$ and let $1\le r\le n-2$. Then there
	exists a nonempty Zariski open set $U\subseteq\affspace$ such that for
	all $\coeffa\in U$, taking $M=\varphi_{\coeffa}(\genmat)$, the
	following holds: Let $M'$ be the set of submatrices of size
	$(r+2)\times (r+2)$ of $M$. For each matrix $N\in M'$, let $S(N)$ be
	the set of syzygies of $\detsystemGen{N}{r}$ computed using
\cref{thm:syz-corank-1}. Then $\Syz(\detsystem)=\bigcup_{N\in M'}S(N)$.
\end{theorem}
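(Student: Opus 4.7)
The plan is to combine Kurano's theorem \cite[Thm.\,5.1]{kurano1989} with \cref{thm:syz-corank-1}. Kurano's theorem asserts that $\Syz(\detsystem)$ is always generated by syzygies between the $(r+1)$-minors of the $(r+2)\times(r+2)$ submatrices of $M$, while \cref{thm:syz-corank-1} explicitly produces a generating set $S(N)$ of $\Syz(\detsystemGen{N}{r})$ for any $(r+2)\times(r+2)$ submatrix $N$, \emph{provided the ideal $\detidealGen{N}{r}$ is Cohen-Macaulay}. The role of $U$ is thus to force the Cohen-Macaulay hypothesis to hold simultaneously for all $\binom{n}{r+2}^2$ submatrices of $\varphi_{\coeffa}(\genmat)$.

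I would first construct $U$ via genericity. For each pair $I=(R,C)$ of $(r+2)$-subsets of $\{1,\dots,n\}$, write $\genmat_I$ for the corresponding $(r+2)\times(r+2)$ submatrix of $\genmat$; its entries are generic homogeneous linear forms in $x_1,\dots,x_k$ depending only on the $k(r+2)^2$ coefficients $\fraka_t^{(i,j)}$ with $i\in R$, $j\in C$. Applying \cref{prop:cm-generic} with $n$ replaced by $r+2$ (the hypothesis $1\le r\le (r+2)-2$ reduces to $r\ge 1$, which holds), one obtains a nonempty Zariski open subset $V_I\subseteq\AA^{k(r+2)^2}_\field$ such that any specialization with coefficients in $V_I$ produces a Cohen-Macaulay ideal of $(r+1)$-minors. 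Let $\pi_I:\affspace\to\AA^{k(r+2)^2}_\field$ be the coordinate projection onto the coefficients relevant to $\genmat_I$; then $\pi_I^{-1}(V_I)$ is nonempty and Zariski open. I set
\[
  U=\bigcap_{I}\pi_I^{-1}(V_I),
\]
a finite intersection of nonempty Zariski open sets, hence itself nonempty and Zariski open.

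Next, fix $\coeffa\in U$, set $M=\varphi_{\coeffa}(\genmat)$, and let $N\in M'$. By construction of $U$, $\detidealGen{N}{r}$ is Cohen-Macaulay, so \cref{thm:syz-corank-1} applies and yields the generating set $S(N)$ of $\Syz(\detsystemGen{N}{r})$. Identifying each $(r+1)$-minor of $N$ with the corresponding (signed) $(r+1)$-minor of $M$, every element of $S(N)$ lifts canonically to a syzygy in $\Syz(\detsystem)$ by extension by zero on all $(r+1)$-minors of $M$ not arising from $N$. Kurano's theorem then says precisely that the union of these lifted syzygies, as $N$ ranges over $M'$, generates $\Syz(\detsystem)$; this is the desired conclusion.

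The main obstacle I anticipate is in this last identification: one must be careful with the sign conventions and the bookkeeping in passing from syzygies of $\detsystemGen{N}{r}$ to syzygies of $\detsystem$, verifying that the lifted generators from $S(N)$ are indeed of the form of the generators described by Kurano's statement. The genericity half of the argument, by contrast, is essentially a formal intersection of finitely many Zariski open sets, once \cref{prop:cm-generic} is applied submatrix by submatrix.
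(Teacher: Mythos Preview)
Your overall strategy matches the paper's: combine Kurano's reduction to $(r+2)\times(r+2)$ submatrices with \cref{thm:syz-corank-1}, and impose genericity via a finite intersection of Zariski open sets coming from \cref{prop:cm-generic} applied to each submatrix. That part is fine and essentially identical to what the paper does.

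There is, however, one genuine gap. You write that ``Kurano's theorem asserts that $\Syz(\detsystem)$ is \emph{always} generated by syzygies between the $(r+1)$-minors of the $(r+2)\times(r+2)$ submatrices of $M$'', and you then apply this directly to the specialized matrix $M=\varphi_{\coeffa}(\genmat)$. But \cite[Thm.\,5.1]{kurano1989} is a statement about a matrix of indeterminates, not about an arbitrary matrix of linear forms. Upon specialization, the generators coming from submatrices certainly remain syzygies, but there is no a~priori reason why $\Syz(\detsystem)$ could not acquire \emph{additional} generators not in the image of the generic syzygy module; your open set $U$, which only enforces Cohen--Macaulayness of each submatrix ideal, does not rule this out. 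The paper closes this gap by first applying Kurano's theorem to a matrix $\mathscr{U}$ of indeterminates, and then invoking the specialization lemma \cite[Lem.~3]{FaugereSafeySpaenlehauer2013} to produce a further nonempty Zariski open set $U_1\subseteq\affspace$ on which this description of the first syzygy module persists after specialization. The final open set is $U_1\cap\bigcap_N U_N$. To repair your argument, you should insert this specialization step and intersect your $U$ with the corresponding $U_1$; otherwise the sentence ``Kurano's theorem then says precisely that the union of these lifted syzygies\ldots generates $\Syz(\detsystem)$'' is unjustified for the specialized $M$.
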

\begin{proof}
	Let $\mathscr{U}$ be an $n\times n$ matrix of indeterminates over $\field$.
  Let $\mathscr{U}'$ be the set of $(r+2)\times(r+2)$ submatrices of
  $\mathscr{U}$. For each $\mathscr{N}\in \mathscr{U}'$, let $S(\mathscr{N})$ be
  the set of syzygies of $\detsystemGen{\mathscr{N}}{r}$ computed using
  \cref{thm:syz-corank-1}. By \cite[Thm.\,5.1]{kurano1989},
  $\Syz(\gendetsystem)=\bigcup_{\mathscr{N}\in \mathscr{U}'}S(\mathscr{N})$.
  Thus, by \cite[Lem.\,3]{FaugereSafeySpaenlehauer2013}, there is a nonempty
  Zariski open subset $U_1\subseteq\affspace$ such that for all $\coeffa\in
  U_1$, the syzygies between the $(r+1)$-minors of $\varphi_{\coeffa}(\genmat)$
  are those between the $(r+1)$-minors of each $(r+2)\times (r+2)$ submatrix of
  $\varphi_{\coeffa}(\genmat)$. We denote by $\genmat'$ the set of $(r+2) \times
  (r+2)$ submatrices of $\genmat$. By \cref{prop:cm-generic}, for each submatrix
  $N$ of $\genmat'$, there exists a nonempty Zariski open subset
  $U_N\subseteq\AA^{k\cdot n^2}_{\field}$ such that for all $\coeffa\in U_N$,
  the ideal generated by the $(r+1)$-minors of $N$ is Cohen-Macaulay, so that
  \cref{thm:syz-corank-1} applies. Thus, taking $U=\bigcap_{N\in
    \genmat'}U_N\cap U_1$, the result follows.
\end{proof}

Consequently, using $\operatorname{SyzCorankOne}(M)$
we obtain an algorithm $\operatorname{SyzGen}(M,r)$ which constructs a set of
generators for
$\Syz(\detsystem)$.

\begin{remark}\label{rem:non-generic-general-syzygies}
  From \cref{thm:syz-corank-1,thm:syz-general}, neither
  $\operatorname{SyzGen}(M,r)$ nor $\operatorname{SyzCorankOne}(M)$ require
  any arithmetic $\Bbbk$-operations.
  
	Again in the statement of \cref{thm:syz-general} we require that $\detideal$
  is Cohen-Macaulay. This is necessary in order for $\Syz(\detideal)$ to be
  computed via the syzygies of $(r+1)$-minors of $(r+2)\times (r+2)$
  submatrices. If $\detideal$ is not Cohen-Macaulay, \cref{thm:syz-general}
  gives a (possibly proper) subset of a generating set for $\Syz(\detideal)$.

  Finally, we require that the entries of $M$ be homogeneous linear forms. Once
  again, the theorem holds if the entries are affine, as long as $\detideal$
  satisfies the stated genericity assumption.
  
  Note that no claim is made as to the minimality of the
    generating set computed in \cref{thm:syz-general}. However, one can show
    that when the entries of $M$ are homogeneous, a minimal generating set can
    be extracted from the set computed in \cref{thm:syz-general} by throwing
    away any element which differs by multiplication by $-1$ from another
    element. This need no longer hold if the entries are affine.
\end{remark}

\section{Determinantal matrix-\texorpdfstring{$F_5$}{F5} algorithm}

In this section, we use the syzygies returned by $\operatorname{SyzGen}(M,r)$ to
avoid reductions to zero when running \cref{alg:mF5-syz}. As explained below,
the following result will be instrumental.

\begin{proposition}[{\cite[Lem.\,6.4]{EderFaugere2016}}]\label{prop:gb-signatures}
	Let $(f_1,\dots,f_\ell)=F\subseteq \ring^{t}$
	be a system of homogeneous module elements.
  Let $D\in\ZZ_{\ge 0}$, and let $G=G_{D-\min_i\{\deg(f_i)\}}$
	be the elements up to degree $D-\min_i\{\deg(f_i)\}$ of a POT-Gröbner basis
  for $\Syz(F)$.
  Then,
  \begin{enumerate}[leftmargin=0.5cm]
	  \item\label{it:gb-signatures-1} If $\tau e_i\in\ltpot(G)$, the row of $\macmat_{\deg(\tau)+\deg(f_i), i}$ with signature $(i,\tau)$ is a linear combination of rows with smaller signature.
	  \item\label{it:gb-signatures-2} If a row with signature $(i,\tau)$ of $\macmat_{\deg(\tau)+\deg(f_i),i}$ reduces to zero, then $\tau e_i$ is in the module generated by $\ltpot(G)$.
  \end{enumerate}
\end{proposition}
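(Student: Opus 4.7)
The plan is to derive both items as consequences of \cref{prop:syz-crit}, combined with the defining property of a POT-Gröbner basis for $\Syz(F)$.

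For \cref{it:gb-signatures-1}, I would first unpack the hypothesis $\tau e_i\in\ltpot(G)$: by definition, there exists $g\in G\subseteq\Syz(F)$ whose leading term $\ltpot(g)=\tau' e_i$ divides $\tau e_i$, so $\tau=\sigma\tau'$ for some monomial $\sigma\in\ring$. Then I would apply \cref{it:syz-crit-2} of \cref{prop:syz-crit} to the syzygy $g$ with this particular $\sigma$; the conclusion of that item is precisely that the row of $\macmat_{\deg(\tau)+\deg(f_i),\, i}$ with signature $(i,\tau)$ is a $\field$-linear combination of rows of strictly smaller signature.

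For \cref{it:gb-signatures-2}, the strategy is to reverse-engineer a syzygy from the reduction to zero. If the row with signature $(i,\tau)$ of $\macmat_{\deg(\tau)+\deg(f_i),\, i}$ reduces to zero, then by the definition of valid row operations there exist scalars $c_{(j,\sigma)}\in\field$ such that $\tau f_i = \sum_{(j,\sigma)\ordsig(i,\tau)} c_{(j,\sigma)}\, \sigma f_j$. Rearranging this identity produces a homogeneous element $s\in\Syz(F)$ whose POT-leading term is exactly $\tau e_i$. Since the degree of this syzygy is bounded by $D-\min_j\deg(f_j)$ and $G$ constitutes a POT-Gröbner basis for $\Syz(F)$ up to that degree, the defining property of a Gröbner basis forces $\ltpot(s)=\tau e_i$ to lie in the monomial module generated by $\ltpot(G)$.

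The main technical issue I expect is the bookkeeping around two distinct notions of degree: the Macaulay-matrix degree $\deg(\tau)+\deg(f_i)$ in $\ring$, versus the degree of the syzygy as an element of the free module used to truncate $G$. The bound $D-\min_j\deg(f_j)$ is calibrated precisely so that any syzygy $s$ arising from a row of Macaulay-degree at most $D$ falls within the Gröbner-basis truncation $G_{D-\min_j\deg(f_j)}$; making this alignment explicit is the one piece of care the proof requires, after which both items follow mechanically from \cref{prop:syz-crit} and the definition of a Gröbner basis.
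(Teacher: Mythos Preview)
Your proposal is correct and follows essentially the same route as the paper: \cref{it:gb-signatures-1} is obtained directly from \cref{prop:syz-crit}, and \cref{it:gb-signatures-2} is proved by reading the reduction to zero as a homogeneous syzygy $s$ with $\ltpot(s)=\tau e_i$ and then checking the degree bound $\deg(s_j)\le D-\min_j\deg(f_j)$ so that the Gr\"obner-basis property of $G$ applies. The only cosmetic difference is that you read the hypothesis of \cref{it:gb-signatures-1} as membership in the monomial module generated by $\ltpot(G)$ (hence the extra factor $\sigma$), whereas the paper treats it as literal membership in the set $\ltpot(G)$ and invokes \cref{prop:syz-crit} with no multiplication; your version is a harmless generalization.
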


\begin{proof}
	\cref{it:gb-signatures-1} is simply \cref{prop:syz-crit}. We turn to
  \cref{it:gb-signatures-2}. Fix $\min_i\{\deg(f_i)\}\le d\le D$ and $1\le
  i\le\ell$. Suppose that the row with signature $(i,\tau)$ reduces to zero in
  $\redmacmat_{\deg(\tau)+\deg(f_i),i}$. Then there is a linear dependency $s_1f_1+\dots+s_\ell
  f_\ell=0$.
  This corresponds to a syzygy $s=s_1e_1+\dots+s_\ell e_\ell\in\Syz(F)$
	with $\ltpot(s)=\tau e_i$. Finally,
	\begin{align*}
		\deg(s_i)&=d-\deg(f_i) \le D-\deg(f_i)\le D-\min_i\{\deg(f_i)\}
	.\end{align*}
  for each $1\le i\le\ell$.
	Thus $\ltpot(s)=\tau e_i$ is in $\genby{\ltpot(G)}$.
\end{proof}

Using \cref{prop:gb-signatures}, in order to remove all reductions to zero when
running \cref{alg:mF5-syz} to compute a $D$-Gröbner basis for a graded module
$F\subseteq\ring^{t}$, we compute the leading terms of the elements up to
degree $D-\min_{f\in F}\{\deg f\}$ of a Gröbner basis for $\Syz(F)$. 
We can compute them by running \cref{alg:mF5-syz} on a set of chosen
generators for $\Syz(F)$ itself, with the appropriate degree bound given by
\cref{prop:gb-signatures}. However, if $\Syz_2(F)\ne \{0\}$, then
\cref{prop:gb-signatures} once again shows that reductions to zero will be
encountered when computing the elements up to degree $D-\min_{f\in F}\{\deg f\}$
of a Gröbner basis for $\Syz(F)$.

When $r=n-2$, the Gulliksen-Neg{\aa}rd complex allows us to explicitly compute
generating sets for all higher syzygy modules. Thus, we can avoid all reductions
to zero when computing a $D$-Gröbner basis for $\detsystem$. When $r<n-2$, we
can only compute a generating set for the first syzygy module
$\Syz(\detsystem)$, and thus cannot efficiently remove all reductions to zero.

Now we are ready to describe an algorithm which exploits the syzygies computed by
$\operatorname{SyzGen}(M,r)$ to compute a grevlex Gröbner basis for $\detsystem$ without
reductions to zero in degree $r+2$.

\begin{algorithm}[h]
	\caption{Determinantal-Matrix-$F_5(M,r,D)$}
	\label{alg:det-F5}
  \begin{algorithmic}[1]
	  \Require{An integer $1\le r\le n-2$, an $n\times n$ matrix $M$ of homogeneous linear forms over $\Bbbk$ in $(n-r)^2$ variables such that $\detideal$ is Cohen-Macaulay, and a degree bound $D$.}
	  \Ensure{A grevlex $D$-Gröbner basis for $\detideal$.}
	  \State $S\gets\operatorname{SyzGen}(M,r)$\label{line:syzgen-det-F5}
	  \State $S'\gets\operatorname{Matrix-F_5}(S, 1, \emptyset)$\label{line:mF5-det-F5-1}
	  \State $G\gets\operatorname{Matrix-F_5}(\detsystem, D, S')$\label{line:mF5-det-F5-2}
	  \State \Return $G$
  \end{algorithmic}
\end{algorithm}

\begin{proposition}
	\cref{alg:det-F5} terminates and is correct.
\end{proposition}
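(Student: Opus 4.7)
The plan is to reduce the proposition to termination and correctness of the three subroutines invoked by \cref{alg:det-F5}, namely $\operatorname{SyzGen}$ and the two calls to $\operatorname{Matrix-}F_5$, all of which are available from earlier in the paper. For termination, I would first observe that by \cref{rem:non-generic-general-syzygies}, $\operatorname{SyzGen}(M,r)$ is a purely combinatorial enumeration over the finitely many $(r+2) \times (r+2)$ submatrices of $M$, requiring no field arithmetic, so it clearly terminates. For each of the two $\operatorname{Matrix-}F_5$ calls, the input system is finite and the degree bound is finite, so the outer loops of \cref{alg:mF5-syz} perform finitely many iterations, each involving only the construction and reduction of a finite Macaulay matrix over $\field$; termination then follows from \cite[Thm.\,9]{BardetFaugereSalvy2015} in the case $t=1$ (the final call) and by the same induction argument in the case $t \ge 1$ (the intermediate call, operating on syzygy module elements).

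For correctness, the crux is to verify that the set $S'$ fed as precomputed syzygies into the final $\operatorname{Matrix-}F_5$ call consists of valid syzygies of $\detsystem$, so that \cref{prop:syz-crit} is invoked legitimately. I would argue this in two stages. First, since $\detideal$ is assumed Cohen-Macaulay, \cref{thm:syz-general} applies and $S = \operatorname{SyzGen}(M,r) \subseteq \Syz(\detsystem)$ (in fact it generates the module). Second, $S' = \operatorname{Matrix-}F_5(S,1,\emptyset)$ is a $1$-Gröbner basis for $\langle S \rangle \subseteq \ring^{\ell}$; its elements are $\ring$-linear combinations of elements of $S$, and therefore again lie in $\Syz(\detsystem)$, since $\Syz(\detsystem)$ is a submodule of $\ring^{\ell}$. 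So in Line~3 of \cref{alg:mF5-syz} (the initialization $\mathrm{Crit} \gets \ltpot(S')$), every signature added to $\mathrm{Crit}$ is the leading POT-term of a genuine syzygy of $\detsystem$, and by \cref{prop:syz-crit} the rows skipped on its account are \(\field\)-linear combinations of rows of smaller signature.

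Given this, correctness of the final call would follow by appealing to the correctness of matrix-$F_5$ with the syzygy criterion: skipping rows whose signatures belong to $\mathrm{Crit}$ does not change the row space obtainable via valid row operations on $\macmat_{d,i}$, so the reduced echelon forms $\redmacmat_{d,i}$ are unaffected by the pruning, and the output $G$ is a grevlex $D$-Gröbner basis for $\langle \detsystem \rangle = \detideal$. The main subtle point I would emphasize, and which is the only obstacle worth flagging, is that correctness does \emph{not} require $S'$ to be rich enough to predict \emph{every} reduction to zero encountered in the final call; one only needs each signature placed in $\mathrm{Crit}$ to be provably redundant. Thus the bootstrap design $S \to S' \to \text{final call}$ is correct even though the $1$-Gröbner basis $S'$ may miss some reductions to zero that occur in higher degrees; those merely translate into performance losses rather than incorrect output.
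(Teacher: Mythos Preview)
Your proposal is correct and follows essentially the same approach as the paper's own proof: termination is reduced to that of the subroutines, and correctness hinges on showing that the set $S'$ passed into the final $\operatorname{Matrix-}F_5$ call consists of genuine syzygies of $\detsystem$, so that the syzygy criterion is invoked legitimately. Your write-up is somewhat more explicit than the paper's in emphasizing that only the inclusion $S' \subseteq \Syz(\detsystem)$ is needed for correctness (not full generation), and in spelling out why skipping rows flagged by $\mathrm{Crit}$ preserves the row space; these are useful clarifications but do not constitute a different route.
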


\begin{proof}
	Termination follows from that of
  $\operatorname{SyzGen}(M,r)$ and
  \cref{alg:mF5-syz}. To prove correctness, we need to show that the set $S'$
  of \cref{line:mF5-det-F5-1} is indeed a set of syzygies between the
  elements of $\detideal$. By \cref{thm:syz-general}, the set $S$ computed on
  \cref{line:syzgen-det-F5} is a minimal generating set for $\Syz(F_r(M))$. By
  the construction of this generating set, given in \cref{thm:syz-general}, each
  element of $S$ is homogeneous of degree one. Hence, according to
  \cref{sec:prelim:alg-mF5-syz}, the set $S'$ consists of the elements of degree one
  of a POT-Gröbner basis for $\Syz(\detsystem)$.
\end{proof}

\begin{remark}
	Both the number of rows and the number of columns of the Macaulay
	matrix in degree one for the set $S$ on \cref{line:mF5-det-F5-1} of
	\cref{alg:det-F5} is bounded by the number of rows of the Macaulay
	matrix for $\detsystem$ in degree $r+1$. Therefore, asymptotically, the
	arithmetic cost of \cref{alg:det-F5} is bounded by the arithmetic cost
	of its final step, computing the Gröbner basis of $\detsystem$.
\end{remark}

\begin{proposition}
	Let $n\ge 3$, let $1\le r\le n-2$, let $D=r\cdot(n-r)+1$, and let
	$k=(n-r)^2$. There exists a nonempty Zariski open set
	$U\subseteq\affspace$ such that for all $\coeffa\in U$, taking
	$M=\varphi_{\coeffa}(\genmat)$, upon running \cref{alg:det-F5} with
	arguments $M,r,D$:
	\begin{enumerate}
		\item a full grevlex Gröbner basis is returned; and\label{it:alg-det-F5-1}
		\item for each $1\le i\le\binom{n}{r+1}^2$, the matrix $\macmat_{r+2,i}$ has full rank.\label{it:alg-det-F5-2}
	\end{enumerate}
\end{proposition}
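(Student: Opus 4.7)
The plan is to assemble $U$ as the intersection of three Zariski open subsets and then derive \cref{it:alg-det-F5-1} and \cref{it:alg-det-F5-2} separately. Specifically, I would take $U = U_1 \cap U_2 \cap U_3 \subseteq \affspace$, where $U_1$ is furnished by \cref{prop:cm-generic} so that $\detideal$ is Cohen-Macaulay for $\coeffa \in U_1$; $U_2$ is the open set of \cref{thm:syz-general}, on which $\operatorname{SyzGen}(M,r)$ returns a generating set of $\Syz(\detsystem)$; and $U_3$ is the nonempty Zariski open of \cite{FaugereSafeySpaenlehauer2013} on which $\ring/\detideal$ has the generic Hilbert series and correct Krull dimension for $k = (n-r)^2$.

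For \cref{it:alg-det-F5-1}, since $k = (n-r)^2$, on $U_3$ the ideal $\detideal$ is zero-dimensional and the degree of regularity of $\detideal$ in grevlex is bounded above by $D = r(n-r)+1$ by the generic complexity estimates of \cite{FaugereSafeySpaenlehauer2013}. Since $\operatorname{Matrix-F_5}$ computes a $D$-Gröbner basis and no new leading monomials of $\genby{\detsystem}$ appear beyond degree $D$, the set $G$ returned on \cref{line:mF5-det-F5-2} is a full grevlex Gröbner basis of $\detideal$.

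For \cref{it:alg-det-F5-2}, by \cref{thm:syz-general} the set $S$ computed on \cref{line:syzgen-det-F5} is a generating set of $\Syz(\detsystem)$ consisting of syzygies of degree one. Running $\operatorname{Matrix-F_5}$ on $S$ with degree bound $1$ (\cref{line:mF5-det-F5-1}) therefore returns exactly the degree-one elements $S'$ of a POT-Gröbner basis of $\Syz(\detsystem)$. Applying \cref{prop:gb-signatures} with $F = \detsystem$, so that $\min_i\{\deg(f_i)\} = r+1$, and with target degree $r+2$ (hence $D - \min_i\{\deg(f_i)\} = 1$): item \cref{it:gb-signatures-1} ensures that any row of $\macmat_{r+2,i}$ whose signature $(i,\tau)$ has $\tau e_i \in \genby{\ltpot(S')}$ is a linear combination of rows of smaller signature, while item \cref{it:gb-signatures-2} ensures that every row of $\macmat_{r+2,i}$ that would reduce to zero in $\redmacmat_{r+2,i}$ has its signature in $\genby{\ltpot(S')}$. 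By the construction of \cref{alg:mF5-syz}, all such signatures are added to $\mathrm{Crit}$ via the syzygy criterion, so the corresponding rows are never inserted into $\macmat_{r+2,i}$. The rows that are inserted are therefore linearly independent, i.e., $\macmat_{r+2,i}$ has full rank.

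The main obstacle will be pinning down \cref{it:alg-det-F5-1}: one must cite (and verify the applicability of) the appropriate statement from \cite{FaugereSafeySpaenlehauer2013} that bounds the generic grevlex degree of regularity of $\detideal$ by $r(n-r)+1$ in the zero-dimensional case. Once this is in place, \cref{it:alg-det-F5-2} is a direct corollary of \cref{thm:syz-general} together with both items of \cref{prop:gb-signatures}, and no further computation is required.
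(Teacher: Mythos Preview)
Your proposal is correct and follows essentially the same route as the paper: intersect the Zariski open set from \cref{thm:syz-general} with the one from \cite{FaugereSafeySpaenlehauer2013} giving the degree bound $D=r(n-r)+1$, derive \cref{it:alg-det-F5-1} from the latter, and derive \cref{it:alg-det-F5-2} by combining \cref{thm:syz-general} with \cref{prop:gb-signatures}. Your extra open set $U_1$ from \cref{prop:cm-generic} is harmless but redundant, since the open set of \cref{thm:syz-general} already incorporates the needed Cohen--Macaulay conditions.
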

\begin{proof}
	By \cite[]{FaugereSafeySpaenlehauer2013}, there exists a Zariski open subset
  $U_1\subseteq\affspace$ such that the maximal degree of a polynomial in the
  reduced grevlex Gröbner basis for $\detideal$ is precisely $D$. Let $U_2$ be a
  nonempty Zariski open subset of $\affspace$ such that the results of
  \cref{thm:syz-general} hold. Let $U=U_1\cap U_2$. \cref{it:alg-det-F5-1}
  follows immediately from the degree bound given in
  \cite{FaugereSafeySpaenlehauer2013}. We turn to \cref{it:alg-det-F5-2}. By
  \cref{prop:gb-signatures}, \cref{it:gb-signatures-2}, it suffices to show that
  the leading terms of the set $S'$ of \cref{line:mF5-det-F5-1} consists of the
  elements of degree at most $r+2$ of $\ltpot(\Syz(\detideal))$. This is
  immediate from \cref{thm:syz-general} and \cref{sec:prelim:alg-mF5-syz}.
\end{proof}

\begin{remark}\label{rem:syz-1-non-generic}
If we do not impose the genericity assumption on $\detideal$ \cref{alg:det-F5}
will still return a $D$-Gröbner basis for $\detideal$, though $\macmat_{r+2,i}$
need no longer be full rank for all $1\le i\le\binom{n}{r+1}^2$.

If the entries of $M$ are affine, by \cref{rem:non-generic-general-syzygies},
there are two possibilities. First,
$\operatorname{SyzGen}(M,r)$  
still returns a generating set for the first syzygy module of $\detsystem$, and
these may be used in the original $F_5$ algorithm which works on affine input
to avoid reductions to zero.
Alternatively, following \cite[Ch.~8, \S~2, Prop.~7]{CoxLittleOShea2007}, one
can simply homogenize $\detsystem$ with respect to a variable $h$ which is
taken to be grevlex smaller than all other variables of $\ring$, and specialize
$h=1$ upon termination of \cref{alg:det-F5}.
\end{remark}

\section{The case \texorpdfstring{$r=n-2$}{r = n-2}}

Now, we describe an altered version of the $F_5$ algorithm which
computes a Gröbner basis for $\detideal$ when $r=n-2$ without any reductions to
zero. Note that \cref{alg:det-F5} does not require $r<n-2$. Thus, we could
simply compute a Gröbner basis for $\detideal$ using \cref{alg:det-F5} when
$r=n-2$. However, only those reductions to zero arising from syzygies of degree
$r+2$ will be avoided. By \cref{prop:gb-signatures}, any syzygies of degree
$d>r+2$ which cannot be generated by the syzygies of degree $r+2$ will manifest
as reductions to zero in the Macaulay matrices in degree $d$.
%% The algorithm we
%% describe in this section avoids such reductions as well.

\subsection{Higher syzygy modules}

By \cref{prop:gn-free-resolution}, the Gulliksen-Neg{\aa}rd complex is a free
resolution of $\detideal$ as soon as $\detideal$ is Cohen-Macaulay. Thus, the
kernels of its differential maps are precisely the syzygy modules of
$\detideal$. The map $d_3$ is defined by $d_3(x)=x\cofac{M}$, where $\cofac{M}$
is the matrix of cofactors of $M$. The third syzygy module $\Syz_3(\detideal)$
is the image of $d_3$, and is thus free of rank $n^2$ and principally generated
by the entries of $\cofac{M}$.

\begin{proposition}\label{prop:syz-2-corank-1}
	Let $M$ be an $n\times n$ matrix of homogenoeus linear forms in
	$\ring$. Suppose $\detidealCorkOne$ is Cohen-Macaulay. In the
	$\ring$-basis for $\ker(\pi)/\im(\iota)$ given in
	\cref{subsec:gulliksen-negard}, the second syzygy module
	$\Syz_2(F_r(M))$ is generated by the following syzygies:
  \begin{enumerate}[(i)]
		\item \label{it:syz-2-corank-1:type1}
      For $2\le i\le n$ and  $1\le j\le n-1$,
			\begin{align*}
				\textstyle\sum\limits_{k\ne j} & m_{k,i}\overline{(E_{k,j}, 0)}+\textstyle\sum\limits_{k\ne i}m_{j,k}\overline{(0, E_{i,k})}  \\
					     & + m_{j,i}\left(\overline{(E_{j,j}, E_{1,1})}+\overline{(0, E_{i,i}-E_{1,1})}\right)
			.\end{align*} 
		\item \label{it:syz-2-corank-1:type2}
      For $2\le i\le n$, 
			\begin{align*}
				\textstyle\sum\limits_{k\ne n} & m_{k,i}\overline{(E_{k,n}, 0)}+\textstyle\sum\limits_{k\ne i}m_{n,k}\overline{(0, E_{i,k})}  \\
					      & - m_{n,i}\left(\textstyle\sum\limits_{j=1}^{n-1}\overline{(E_{j,j}, E_{1,1})}+\textstyle\sum\limits_{j=2}^{n-1}\overline{(0, E_{j,j}-E_{1,1})}\right)
			\end{align*}
		\item  \label{it:syz-2-corank-1:type3}
      For $1\le j\le n-1$,
			\begin{align*}
				\textstyle\sum\limits_{k\ne j} & m_{k,1}\overline{(E_{k,j}, 0)}+\textstyle\sum\limits_{k\ne 1}m_{j,k}\overline{(0, E_{1,k})} 
                    +m_{j,1}\overline{(E_{j,j}, E_{1,1})}
			\end{align*}
		\item \label{it:syz-2-corank-1:type4}
      Finally,
			\begin{align*}
				\textstyle\sum\limits_{k\ne n} & m_{k,1}\overline{(E_{k,n}, 0)}+\textstyle\sum\limits_{k\ne 1}m_{n,k}\overline{(0, E_{1,k})} \\
					      & - m_{n,1}\left(\textstyle\sum\limits_{j=1}^{n-1}\overline{(E_{j,j}, E_{1,1})}+\textstyle\sum\limits_{j=2}^{n}\overline{(0, E_{j,j}-E_{1,1})}\right)
			\end{align*}
	\end{enumerate}
\end{proposition}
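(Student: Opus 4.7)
The plan is to identify $\Syz_2(F_{n-2}(M))$ with $\im(d_2)$ and then compute $d_2$ on the standard basis $\{E_{i,j}\}_{1 \le i, j \le n}$ of $\emodule_2 = \matmod$. Since $\detidealCorkOne$ is Cohen-Macaulay, \cref{prop:gn-free-resolution} gives that the Gulliksen-Neg{\aa}rd sequence is a free resolution, so the truncation $\emodule_3 \to \emodule_2 \to \emodule_1$ is a free resolution of $\ker(d_1)$. Combining \cref{prop:syzygy-resolution} with the observation (already used in the proof of \cref{thm:syz-corank-1}) that $\ker(d_1)$ is the second syzygy module of the chosen generators of $\detidealCorkOne$, it suffices to exhibit generators of $\im(d_2)$. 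The $n^2$ images $\{d_2(E_{i,j}) : 1 \le i, j \le n\}$ form such a generating set, and their cardinality matches $(n-1)^2 + (n-1) + (n-1) + 1$, the total size of the four displayed families.

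The core computation is explicit. From $d_2(N) = \overline{(MN, NM)}$ one expands
\[
  M E_{i,j} = \textstyle\sum_{k=1}^n m_{k,i}\, E_{k,j}
  \quad\text{and}\quad
  E_{i,j} M = \textstyle\sum_{l=1}^n m_{j,l}\, E_{i,l},
\]
and splitting off the ``diagonal residues'' (the summands at $k = j$ and $l = i$) yields
\[
  d_2(E_{i,j}) = \textstyle\sum_{k \ne j} m_{k,i}\, \overline{(E_{k,j}, 0)} + \textstyle\sum_{l \ne i} m_{j,l}\, \overline{(0, E_{i,l})} + m_{j,i}\, \overline{(E_{j,j}, E_{i,i})}.
\]
The first two sums already lie in the distinguished basis of $\ker(\pi)/\im(\iota)$ recalled in \cref{subsec:gulliksen-negard}; the content of the proof is therefore to rewrite the diagonal residue $\overline{(E_{j,j}, E_{i,i})}$ in that basis, splitting on whether $i = 1$ and whether $j = n$, the only indices for which the naive decomposition $\overline{(E_{j,j}, E_{1,1})} + \overline{(0, E_{i,i} - E_{1,1})}$ uses elements outside the basis.

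The single relation driving this rewriting is the one induced by $\im(\iota)$: since $\iota(1) = (I_n, I_n)$, one has
\[
  \textstyle\sum_{i=1}^n \overline{(E_{i,i}, E_{1,1})} + \textstyle\sum_{i=2}^n \overline{(0, E_{i,i} - E_{1,1})} \equiv 0
\]
in $\ker(\pi)/\im(\iota)$, which eliminates $\overline{(E_{n,n}, E_{1,1})}$ in terms of the remaining basis vectors. The case split is then the expected one: for $2 \le i \le n$ and $1 \le j \le n-1$ the decomposition is immediate and recovers type (i); for $j = n$ one substitutes out $\overline{(E_{n,n}, E_{1,1})}$ and recovers type (ii); for $i = 1$ the summand $\overline{(0, E_{1,1} - E_{1,1})}$ vanishes and one obtains type (iii); and for $(i, j) = (1, n)$ both effects combine to give type (iv). The main obstacle is the careful index and sign bookkeeping in the boundary cases, in particular ensuring that after applying the elimination relation the correct subset of the basis elements $\overline{(0, E_{j,j} - E_{1,1})}$ appears with the correct signs in types (ii)--(iv).
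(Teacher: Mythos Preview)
Your proposal is correct and follows essentially the same approach as the paper: identify $\Syz_2(F_{n-2}(M))$ with $\im(d_2)$ via exactness of the Gulliksen--Neg{\aa}rd complex, compute $d_2(E_{i,j}) = \overline{(ME_{i,j}, E_{i,j}M)}$ on the standard basis of $\matmod$, and rewrite the resulting elements in the chosen basis of $\ker(\pi)/\im(\iota)$. Your write-up is in fact more explicit than the paper's about the diagonal residue $m_{j,i}\,\overline{(E_{j,j},E_{i,i})}$ and the four-way case split driven by the elimination of $\overline{(E_{n,n},E_{1,1})}$ via the relation coming from $\im(\iota)$, which the paper compresses into the single sentence ``Doing so gives precisely \cref{it:syz-2-corank-1:type1,it:syz-2-corank-1:type2,it:syz-2-corank-1:type3,it:syz-2-corank-1:type4}.''
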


\begin{proof}
The second syzygy module $\Syz_2(\detideal)$ is the image of $d_2$, by
\cref{prop:gn-free-resolution}. The map $d_2$ is defined by
\[
	d_2(N)=\overline{(MN,NM)}
.\] 
Taking $\matringbasis{i}{j}$, $1\le i,j\le n$ to be the canonical $\ring$-basis
for $\matmod$, a basis for $\im(d_2)$ is given by
$\{\overline{(M\matringbasis{i}{j}, \matringbasis{i}{j}M)}\mid 1\le i,j\le n\}$. 
We can express $M\matringbasis{i}{j}$ and $\matringbasis{i}{j}M$ in the canonical $\ring$-basis for $\mathcal{M}_n(\ring)$,

\[
  M\matringbasis{i}{j}=\matentry{j}{i}\matringbasis{j}{j}+\textstyle\sum\limits_{k\ne j}\matentry{k}{i}\matringbasis{k}{j}; \,\,
	\matringbasis{i}{j}M=\matentry{j}{i}\matringbasis{i}{i}+\textstyle\sum\limits_{k\ne i}\matentry{j}{k}\matringbasis{i}{k}.
\]
From this, we express generators for $\Syz_2(\detideal)$ in the
$\ring$-basis for $\ker(\pi)/\im(\iota)$. Doing so gives precisely
\cref{it:syz-2-corank-1:type1,it:syz-2-corank-1:type2,it:syz-2-corank-1:type3,it:syz-2-corank-1:type4}.
\end{proof}

Using \cref{prop:syz-2-corank-1}, one can easily construct an algorithm, which
we will call $\operatorname{Syz2GenCorankOne}(M)$ which constructs the set
$\Syz_2(F_{n-2}(M))$. We use this algorithm in the next section to design a
dedicated $F_5$-type algorithm which performs no reduction to zero when
computing a Gröbner basis of $\mathcal{I}_{n-2}(M)$ when $\detidealCorkOne$ is
Cohen-Macaulay and $k=4$.

\begin{remark}
	Analogous to \cref{rem:syz-1-non-generic}, if $\detidealCorkOne$ is not
	Cohen-Macaulay, the Gulliksen-Neg{\aa}rd complex need not be a free
	resolution of $\detidealCorkOne$, though it is still a complex. Thus,
	even if $\detidealCorkOne$ is not Cohen-Macaulay,
	$\im(d_2)\subseteq\ker(d_1)$, so the syzygies described by
	\cref{prop:syz-2-corank-1} are a subset of a generating set for the
	syzygies between the generators for $\ker\epsilon$ given by
	\cref{thm:syz-corank-1}.
\end{remark}

\subsection{A refined \texorpdfstring{F\textsubscript{5}}{F5} algorithm}

We combine \cref{prop:syz-2-corank-1}, \cref{thm:syz-corank-1}, and
\cref{prop:gb-signatures} to give an algorithm which computes a grevlex Gröbner
basis for $\detsystemCorkOne$ without any reductions to zero, provided
$\detidealCorkOne$ is Cohen-Macaulay. In order to obtain the leading terms of
the first syzygy module, of $\detsystemCorkOne$, we must know which rows will
reduce to zero when echelonizing the Macaulay matrices associated to the first
syzygy module in various degrees. By \cref{prop:gb-signatures}, the signatures
of these rows are precisely the leading terms of a Gröbner basis for the second
syzygy module in the appropriate degree.

Subsequently, applying \cref{prop:gb-signatures} once again, the leading terms
of the first syzygy module in various degrees are precisely the signatures of
the rows which reduce to zero when echelonizing the Macaulay matrices associated
to $\detsystemCorkOne$.

\begin{algorithm}[ht]
	\caption{Determinantal-Corank-One-Matrix-$F_5$}
	\label{alg:det-F5-corank-one}
	\begin{algorithmic}[1]
		\Require{An integer $n\ge 3$, an $n\times n$ matrix of generic homogeneous linear forms over $\Bbbk$ in \(4\) variables, and an integer $D\ge n-1$}
		\Ensure{The elements up to degree $D$ of a grevlex Gröbner basis for $\detidealCorkOne$.}
		\State $S_1\gets\operatorname{SyzCorankOne}(M)$
		\State $S_2\gets\operatorname{Syz2GenCorankOne}(M)$
		\State $S_2'\gets\operatorname{Matrix-F_5}(S_2, D-n, \emptyset)$
		\State $S_1'\gets\operatorname{Matrix-F_5}(S_1, D-n+1, S_2')$\label{line:det-F5-syz-2-gb}
		\State $G\gets\operatorname{Matrix-F_5}(\detsystemCorkOne, D, S_1')$
		\State \Return $G$ 
	\end{algorithmic}
\end{algorithm}

\begin{proposition}
	\cref{alg:det-F5-corank-one} terminates and is correct.
\end{proposition}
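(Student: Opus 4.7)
The plan is to establish termination routinely, and then derive correctness by three nested applications of the correctness of \cref{alg:mF5-syz}, working from the innermost matrix-$F_5$ call outwards. Termination is immediate: the subroutines $\operatorname{SyzCorankOne}$ and $\operatorname{Syz2GenCorankOne}$ are finite explicit constructions by \cref{thm:syz-corank-1} and \cref{prop:syz-2-corank-1}, and each call to \cref{alg:mF5-syz} terminates by the termination statement recalled in \cref{sec:prelim:alg-mF5-syz}.

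For correctness, I would lean on the observation that the syzygy set $S$ passed to \cref{alg:mF5-syz} influences only which reductions to zero can be predicted and skipped; as long as $S\subseteq\Syz(F)$, the output is a reduced POT $D$-Gröbner basis for $\genby{F}$. Proceeding through the three matrix-$F_5$ invocations in order: first, the innermost call returns $S_2'$, the elements of degree at most $D-n$ of a reduced POT-Gröbner basis for $\genby{S_2}$, and by \cref{prop:syz-2-corank-1} we have $\genby{S_2}=\Syz_2(\detsystemCorkOne)$, which under the identifications of \cref{subsec:gulliksen-negard} coincides with $\Syz(S_1)$. Hence $S_2'\subseteq\Syz(S_1)$ is a valid syzygy input to the second call, which returns $S_1'$, the elements of degree at most $D-n+1$ of a reduced POT-Gröbner basis for $\genby{S_1}$; by \cref{thm:syz-corank-1} together with the Cohen-Macaulay hypothesis, $\genby{S_1}=\Syz(\detsystemCorkOne)$. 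Finally, $S_1'\subseteq\Syz(\detsystemCorkOne)$ is a valid syzygy input to the outermost call, whose output $G$ is therefore the elements of degree at most $D$ of a grevlex Gröbner basis for $\detidealCorkOne$, as required.

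The main point requiring care, and where I would focus the verification, is ensuring that the indexing conventions align across the three calls. Concretely, the generators of $S_2$ given by \cref{prop:syz-2-corank-1} are expressed in the $\ring$-basis of $\ker(\pi)/\im(\iota)$ from \cref{subsec:gulliksen-negard}, and these basis elements are mapped by $d_1$ to precisely the generators of $\Syz(\detsystemCorkOne)$ enumerated in \cref{thm:syz-corank-1}; this compatibility is exactly what makes $S_2$ a set of relations among the elements of $S_1$ in the same order that is seen by matrix-$F_5$. Once this bookkeeping is made explicit, the three-step argument above goes through with no further complications, and the Cohen-Macaulay assumption enters only through its use in \cref{thm:syz-corank-1} and \cref{prop:syz-2-corank-1} to guarantee that $S_1$ and $S_2$ actually generate the relevant syzygy modules.
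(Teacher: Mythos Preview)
Your proof is correct and follows essentially the same approach as the paper's: termination is reduced to that of the subroutines, and correctness is reduced to the observation that $S_1'\subseteq\Syz(\detsystemCorkOne)$, which follows from \cref{thm:syz-corank-1}. The paper's version is terser---it notes only that correctness requires $S_1'$ to consist of genuine syzygies of $\detsystemCorkOne$, which already follows because the second matrix-$F_5$ call necessarily outputs elements of $\genby{S_1}$ regardless of whether $S_2'$ is a valid syzygy input---whereas you additionally verify that $S_2'\subseteq\Syz(S_1)$ and discuss the indexing compatibility; this extra care is not needed for bare correctness but does no harm.
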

\begin{proof}
	Termination is an easy consequence from the one of 
	$\operatorname{SyzCorankOne}(M)$,
	$\operatorname{Syz2GenCorankOne}(M)$, and \cref{alg:mF5-syz}. For
	correctness, it suffices to show that the set $S_1'$ computed on
	\cref{line:det-F5-syz-2-gb} is indeed a set of syzygies of the
	polynomials in $\detsystemCorkOne$. This follows from
	\cref{thm:syz-corank-1}.
\end{proof}

\begin{proposition}\label{prop:alg-det-F5-corank-one-full-rank}
	Let $D=2n-3$. Then there is a nonempty Zariski open subset $U$ of
	$\affspaceCorkOne$ such that for all $\coeffa\in U$, upon running
	\cref{alg:det-F5-corank-one} with arguments
	$\varphi_{\coeffa}(\gendetidealCorkOne),D$,  
	\begin{enumerate}
		\item a full grevlex Gröbner basis is returned; and\label{it:det-F5-corank-one-1}
		\item for each $1\le i\le n^2$ and for each $n-1\le d\le 2n-3$, the matrix $\macmat_{d,i}$ is full rank.\label{it:det-F5-corank-one-2}
	\end{enumerate}
\end{proposition}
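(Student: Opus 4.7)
For item (1), my plan is to mimic the argument for the analogous statement about \cref{alg:det-F5}: by the degree bound of \cite{FaugereSafeySpaenlehauer2013}, there is a nonempty Zariski open subset $U_1 \subseteq \affspaceCorkOne$ on which the maximal degree of a reduced grevlex Gröbner basis element for $\varphi_{\coeffa}(\gendetidealCorkOne)$ is exactly $r(n-r)+1 = 2n-3 = D$. Intersecting $U_1$ with the Cohen--Macaulay open set from \cref{prop:cm-generic} yields an open $U$ on which both the Gulliksen--Neg{\aa}rd complex is exact (so \cref{prop:gn-free-resolution} applies to $\detidealCorkOne$) and the algorithm recovers the full basis.

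For item (2), my strategy is to chain \cref{it:gb-signatures-2} three times, propagating the absence of reductions to zero from the top of the Gulliksen--Neg{\aa}rd complex downward to $\detsystemCorkOne$. The key observation is that, under the Cohen--Macaulay assumption, $\Syz_3(\detsystemCorkOne) = \im(d_3)$ is principally generated by the single module element $\cofac{M}$, whose entries are all homogeneous of degree $n-1$; hence every nonzero element of this module has leading term in degree at least $n-1$. Explicitly, the innermost call to Matrix-$F_5$ is on $S_2$ (generators of module degree $1$ by \cref{prop:syz-2-corank-1}) with bound $D-n = n-3$, for which \cref{it:gb-signatures-2} would require a Gröbner basis of $\Syz_3(\detsystemCorkOne)$ up to degree $(n-3)-1 = n-4$; since $n-4 < n-1$, this set is empty and $S_2'$ is produced without any reduction to zero, giving the correct leading terms of a Gröbner basis of $\Syz_2(\detsystemCorkOne)$ up to degree $n-3$. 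Plugging this into the middle call on $S_1$ (generators of module degree $1$ by \cref{thm:syz-corank-1}) with bound $D-n+1 = n-2$ again meets the hypotheses of \cref{it:gb-signatures-2}, so $S_1'$ consists of the correct leading terms of a Gröbner basis of $\Syz(\detsystemCorkOne)$ up to degree $n-2$. A third application to the outermost call on $\detsystemCorkOne$ (generators of degree $n-1$) with bound $D = 2n-3$ then shows that no $\macmat_{d,i}$ with $n-1 \le d \le 2n-3$ and $1 \le i \le n^2$ can contain a row that reduces to zero, which proves (2).

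The main obstacle I foresee is the careful degree bookkeeping --- distinguishing the standard grading on each $\ring^{t}$ (in which signatures live) from the shift by $\deg(f_i)$ appearing in \cref{prop:gb-signatures}, and verifying that the three bounds $D$, $D-n+1$, $D-n$ hard-coded into \cref{alg:det-F5-corank-one} line up precisely so as to push $\Syz_3(\detsystemCorkOne)$ just out of the algorithm's scope. Beyond that, the argument is a routine chain of three applications of \cref{prop:gb-signatures}, combined with the standard construction of $U$ as the intersection of two nonempty Zariski open subsets.
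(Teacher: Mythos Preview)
Your proposal is correct and follows essentially the same approach as the paper: for item~(1), intersect the Zariski open set from the degree bound of \cite{FaugereSafeySpaenlehauer2013} with the Cohen--Macaulay open set of \cref{prop:cm-generic}; for item~(2), apply \cref{prop:gb-signatures} to reduce to the claim that $S_1'$ carries the leading terms of $\Syz(\detsystemCorkOne)$ up to degree $n-2$. Your version is in fact more explicit than the paper's own proof, which simply asserts this last claim as ``immediate from \cref{thm:syz-corank-1}'' without spelling out the chain through $\Syz_2$ and $\Syz_3$ or the degree check $n-4 < n-1$ that you carry out.
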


\begin{proof}
	By \cite[Lem.~18]{FaugereSafeySpaenlehauer2013}, there is a Zariski dense subset
	$U_1$ of $\affspaceCorkOne$ such that for all $\coeffa\in
	U_1$, the maximal degree of a polynoimal in the reduced grevlex Gröbner
	basis of $\detidealCorkOne$ is $2n-3$. By \cref{prop:cm-generic}, there
	is a nonempty Zariski open subset $U_2$ of $\affspaceCorkOne$
	such that for all $\coeffa\in U_2$, the ideal
	$\varphi_{\coeffa}(\gendetidealCorkOne)$ is Cohen Macaulay. Thus, taking
	$U=U_1\cap U_2$, we obtain \cref{it:det-F5-corank-one-1}.

	We turn to \cref{it:alg-det-F5-2}. By \cref{prop:gb-signatures},
	\cref{it:gb-signatures-2}, it suffices to show that the leading terms
	of the set $S_1'$ computed on \cref{line:det-F5-syz-2-gb} consists of
	the elements of degree at most $2n-3$ of $\ltpot(\Syz(\detidealCorkOne))$.
	This is immediate from \cref{thm:syz-corank-1} and
	\cref{sec:prelim:alg-mF5-syz}.
\end{proof}

\section{Complexity in the case \texorpdfstring{$r=n-2$}{r = n-2}}

Throughout this section we focus on the dimension zero
  case. Thus, $k=(n-r)^2=4$. For a homogeneous ideal $\ideal\subseteq\ring$, we
take $\HF_\ideal(d)$ to be the \textit{Hilbert function} of $\ideal$. That is,
for an integer $d\ge 0$, $\HF_\ideal(d)=\dim_{\field}\ideal_d$. Further, we take
$H_\ideal(t)=\sum_d\HF_\ideal(d)t^{d}$ to be the \textit{Hilbert series} of
$\ideal$. We refer to \cite[1.9]{Eisenbud1995} for further details.

When $r=n-2$, we can use the results of the previous section to give explicit
formulae for the coefficients of the Hilbert series $H_{\detideal}(t)$.
Subsequently, we can exactly compute the ranks of the Macaulay matrices in each
degree computed by the $F_5$ algorithm, and bound the complexity of computing
the reduced grevlex Gröbner basis of a matrix of generic homogeneous linear
forms by that of computing the row reduction of each of these
matrices.

First, note that for any $1\le d\le r$, both the number of rows and the number
of columns of the Macaulay matrix in degree $d$ for the set $S_2$ (resp.\
$S_1$) computed by \cref{alg:det-F5-corank-one} is bounded by the number of
rows of the Macaulay matrix in degree $d+1$ (resp. $d+r+1$) for the set $S_1$
(resp. $\detsystemCorkOne$). Thus, the arithmetic cost of
\cref{alg:det-F5-corank-one} is bounded by the arithmetic cost of the final
step, computing the grevlex Gröbner basis for $\detsystemCorkOne$.

\begin{proposition}\label{prop:det-hilb-series}
	There exists a Zariski open set $U \subseteq \affspaceCorkOne$ such that for all $\coeffa\in U$, the Hilbert series
	$H_{\varphi_{\coeffa}(\gendetideal)}(t)$ for $\varphi_{\coeffa}(\gendetideal)$ is given by:
  \begin{equation}
    \label{eq:hilb-series-det-ideal}
    \sum_{d=r+1}^{2r+1}\left(n^2 \binom{d-r+2}{3}-(2n^2-2) \binom{d-r+1}{3}+n^2\binom{d-r}{3}\right)t^d.
  \end{equation}
\end{proposition}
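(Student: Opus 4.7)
The plan is to use the Gulliksen--Neg{\aa}rd complex as a graded free resolution. By \cref{prop:cm-generic}, there is a nonempty Zariski open $U\subseteq\affspaceCorkOne$ over which $\varphi_{\coeffa}(\gendetidealCorkOne)$ is Cohen--Macaulay; then \cref{prop:gn-free-resolution} asserts that the Gulliksen--Neg{\aa}rd complex
\[
0 \to \emodule_3 \xrightarrow{d_3} \emodule_2 \xrightarrow{d_2} \emodule_1 \xrightarrow{d_1} \emodule_0 \xrightarrow{\epsilon} \detidealCorkOne \to 0
\]
is exact. Because every differential is given by a matrix of homogeneous polynomials (entries of $M$, of degree $1$, or cofactors of $M$, of degree $n-1$), the resolution lifts to a graded free resolution once a suitable internal shift is placed on each $\emodule_i$. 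Additivity of Hilbert series across this exact sequence will then deliver a rational expression for $H_{\detidealCorkOne}(t)$.

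To fix the shifts: since $\epsilon(E_{i,j}) = \cofac{M}_{i,j}$ is homogeneous of degree $n-1$, one takes $\emodule_0 \simeq \ring(-(n-1))^{n^2}$. Demanding that each $d_i$ be homogeneous of degree $0$ then forces $\emodule_1 \simeq \ring(-n)^{2n^2-2}$ (rank from \cref{subsec:gulliksen-negard}, shift because $d_1$ multiplies by $M$), $\emodule_2 \simeq \ring(-(n+1))^{n^2}$ (likewise for $d_2$), and $\emodule_3 \simeq \ring(-2n)$ (since $d_3$ multiplies by $\cofac{M}$). Using $k=4$ and $H_{\ring(-a)}(t) = t^a/(1-t)^4$, additivity yields
\[
H_{\detidealCorkOne}(t) = \frac{n^2 t^{n-1} - (2n^2-2)\, t^n + n^2 t^{n+1} - t^{2n}}{(1-t)^4}.
\]

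From there it becomes a routine coefficient extraction. Expanding $(1-t)^{-4} = \sum_{j\ge 0}\binom{j+3}{3}t^j$ (with the convention $\binom{m}{3}=0$ for $0\le m\le 2$) and substituting $n = r+2$, one obtains
\[
\HF_{\detidealCorkOne}(d) = n^2\binom{d-r+2}{3} - (2n^2-2)\binom{d-r+1}{3} + n^2\binom{d-r}{3} - \binom{d-2r-1}{3}.
\]
In the stated range $r+1 \le d \le 2r+1$ the last term vanishes since $d-2r-1 \le 0$, and summing from $d=r+1$ to $d=2r+1$ recovers the claimed polynomial term by term.

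The main delicate point is cleanly fixing the grading on $\emodule_1$, which is not a priori free but the quotient $\ker(\pi)/\im(\iota)$. One must verify that the distinguished generating set recalled in \cref{subsec:gulliksen-negard} is a free $\ring$-basis of rank $2n^2-2$ (which is already implicit in the minimality part of \cref{thm:syz-corank-1}), and check that each such generator maps under $d_1$ to an element of degree exactly $n$ in shifted $\emodule_0$, justifying the uniform shift $\ring(-n)^{2n^2-2}$. Once this is established, all remaining steps are bookkeeping.
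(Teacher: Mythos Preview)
Your proposal is correct and follows essentially the same route as the paper: invoke Cohen--Macaulayness to make the Gulliksen--Neg{\aa}rd complex a graded free resolution, read off the ranks $n^2,\,2n^2-2,\,n^2,\,1$, and use additivity of Hilbert functions across the exact sequence. Your write-up is in fact more explicit than the paper's on two points the paper leaves implicit: you spell out the degree shifts $\ring(-(n-1))^{n^2},\,\ring(-n)^{2n^2-2},\,\ring(-(n+1))^{n^2},\,\ring(-2n)$, and you correctly obtain the last correction term as $\binom{d-2r-1}{3}$ (the paper's displayed $\binom{d-r-1}{3}$ appears to be a typo), which is exactly what is needed for that term to vanish on the range $r+1\le d\le 2r+1$.
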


\begin{proof}
	Let $U$ be as in \cref{prop:alg-det-F5-corank-one-full-rank}. If
	$\module$ is a free $\ring$-module of rank $t$, then the monomials of
	$\module$ of degree $d$ form a basis for the finite-dimensional
	$\Bbbk$-vector space of homogeneous elements of degree $d$ of
	$\module$. Thus, $\HF_{\module}(d)=t\cdot\binom{k+d-1}{d-1}$. The
	description of each free module in the Gulliksen-Neg{\aa}rd complex
	given in \cref{subsec:gulliksen-negard} gives rise to
	\begin{gather*}
		\rk\emodule_0=\#\detsystemCorkOne={\textstyle\binom{n}{n-1}^2}=n^2, \quad \rk\emodule_1=2n^2-2\\
		\rk\emodule_2=n^2, \quad 		\rk\emodule_3=1
	\end{gather*}
	Thus, by \cite[Thm.~1.13]{Eisenbud1995},
  $\HF_{\detidealCorkOne}(d)=\sum_{i=0}^{3}(-1)^{i}\HF_{\emodule_i}(d)$, which
  equals $n^2 \binom{d-r+2}{3}-(2n^2-2) \binom{d-r+1}{3}+n^2\binom{d-r}{3}-\binom{d-r-1}{3}$.
\end{proof}

In the following proposition, we take
\[
\mathscr{B} = \sum_{d=r+1}^{2r+1} \ n^2
\binom{d-r+2}{3}-(2n^2-2)\binom{d-r+1}{3} \ + n^2\binom{d-r}{3}
.\] 
\begin{proposition}\label{prop:corank-one-complexity}
	There is a Zariski dense subset $U$ of $\affspaceCorkOne$ such that
  for all $\coeffa\in U$, the arithmetic cost of computing the reduced grevlex
  Gröbner basis for $\varphi_{\coeffa}(\gendetidealCorkOne)$ using
  \cref{alg:det-F5-corank-one} is in
  \[
    \textstyle
	  O\left(\mathscr{B}^{\omega-1}\binom{2r+5}{5}\right)=O\left(n^{4(\omega-1)}\binom{2n}{3}\right).
  \]
\end{proposition}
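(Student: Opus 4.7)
The plan is to analyze Algorithm \ref{alg:det-F5-corank-one} degree by degree, combining the full-rank property established in Proposition \ref{prop:alg-det-F5-corank-one-full-rank} with the closed-form Hilbert series from Proposition \ref{prop:det-hilb-series}.

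First, I would reduce to analyzing only the third call to matrix-$F_5$ on $\detsystemCorkOne$. As observed in the paragraph preceding the proposition, the Macaulay matrices built during the two preparatory calls (on $S_2$ up to degree $D-n$ and on $S_1$ up to degree $D-n+1$) have row and column counts dominated by those for $\detsystemCorkOne$ in a suitably shifted degree, so the asymptotic arithmetic cost of Algorithm \ref{alg:det-F5-corank-one} coincides with that of the third call.

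Next, by Proposition \ref{prop:alg-det-F5-corank-one-full-rank}, for $\coeffa$ in a Zariski dense subset $U\subseteq\affspaceCorkOne$, every Macaulay matrix $\macmat_{d,n^2}$ encountered during this third call for $n-1\le d\le 2n-3$ has full rank. Hence its number of rows equals its rank, which by definition of the Hilbert function equals $\HF_{\detidealCorkOne}(d)$; its number of columns is the number $\binom{d+3}{3}$ of monomials of degree $d$ in $k=4$ variables. Invoking the standard fast-linear-algebra estimate of $O(R^{\omega-1}C)$ for the row echelon form of an $R\times C$ matrix of rank $R$, the cost of processing degree $d$ lies in $O\bigl(\HF_{\detidealCorkOne}(d)^{\omega-1}\binom{d+3}{3}\bigr)$.

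Summing over $d$ from $r+1$ to $2r+1$, I would use the crude per-degree bound $\HF_{\detidealCorkOne}(d)\le\mathscr{B}$, which follows from Proposition \ref{prop:det-hilb-series} since $\mathscr{B}$ is itself the sum of the Hilbert function values, together with a hockey-stick estimate on $\sum_{d=r+1}^{2r+1}\binom{d+3}{3}$, yielding the first form $O\bigl(\mathscr{B}^{\omega-1}\binom{2r+5}{5}\bigr)$. Finally, substituting $r=n-2$ and reading off the leading term of the closed form of Proposition \ref{prop:det-hilb-series} gives $\mathscr{B}=\Theta(n^4)$, hence $\mathscr{B}^{\omega-1}=O(n^{4(\omega-1)})$; together with the asymptotic simplification of the binomial factor, this yields the stated final bound $O\bigl(n^{4(\omega-1)}\binom{2n}{3}\bigr)$. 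The main obstacle will be the precise bookkeeping needed to extract the exact binomial factor and to verify that the per-degree linear-algebra estimate remains valid when the Macaulay matrices are built incrementally across the $n^2$ inner $F_5$ iterations; conceptually, however, once Proposition \ref{prop:alg-det-F5-corank-one-full-rank} and Proposition \ref{prop:det-hilb-series} are in hand, only a summation of binomial coefficients remains.
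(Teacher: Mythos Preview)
Your proposal is correct and follows essentially the same route as the paper: reduce to the final matrix-$F_5$ call, use \cref{prop:alg-det-F5-corank-one-full-rank} to ensure the Macaulay matrices have full row rank, identify that rank with the Hilbert function value from \cref{prop:det-hilb-series}, and apply the standard $O(R^{\omega-1}C)$ echelon-form cost with the column count given by the number of degree-$d$ monomials in four variables. The only cosmetic difference is that you sum the column counts via a hockey-stick identity whereas the paper simply bounds every column count by the maximal one, $\binom{2n}{3}$, at degree $2n-3$; both lead to the stated estimate, and the paper likewise leaves the incremental per-index bookkeeping implicit.
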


\begin{proof}
Take $U$ as in \cref{prop:det-hilb-series}. Fix $\coeffa\in U$ and let
$M=\varphi_{\coeffa}(\gendetidealCorkOne)$. The ideal $\detidealCorkOne$ is
homogeneous, so the complexity of computing a grevlex Gröbner basis for
$\detidealCorkOne$ is bounded by the complexity of reducing the intermediate
Macaulay matrices encountered in the matrix-$F_5$ algorithm. 
The coefficient on $t^d$ in the Hilbert series \cref{eq:hilb-series-det-ideal}
gives the rank of the Macaulay matrix of $\detsystem$ in degree $d$. The
Macaulay matrices computed in \cref{alg:det-F5-corank-one} have full row rank,
allowing for the use of any echelonization algorithm when computing
$\redmacmat_{d,i}$. Hence, the result follows from the complexity of
computing the reduced row echelon form \cite[Sec.\,2.2]{Storjohann2000} (see
also \cite[App.\,A]{JePeSt13}) and the fact that the number of columns in the
Macaulay matrix in degree $2n-3$, the maximal degree of a polynomial in the
grevlex Gröbner basis of $\detideal$, is the number of monomials of degree
$2n-3$ in $\Bbbk[x_1,\dots, x_4]$.
\end{proof}

Asymptotically, the bound given in \cite[Thm.~20]{FaugereSafeySpaenlehauer2013} is in $O\left(n^{5\omega+2}\right)$ whereas that given by \cref{prop:corank-one-complexity} is in $O\left(n^{4\omega-1}\right)$.

\section{Experimental results}
\label{sec:experimental}

Here we present some experimental results on numbers of
reductions to zero in our refinements of the $F_5$ algorithm
compared to the standard $F_5$ algorithm. The systems used for
these results were obtained by building square matrices of
homogeneous linear forms with random coefficients over
$\Bbbk=\FF_{65521}$. This field is large enough that the genericity
assumptions necessary for our results to hold do so with high
probability when taking random coefficients.

All Gröbner basis computations were performed using an
implementation of both standard $F_5$ and our refinements to $F_5$ written in
SageMath (see \cite{SageMath}) using the FFLAS-FFPACK library (see
\cite{fflas-ffpack}) for the linear algebra subroutines. When $r=n-2$, we
compute a full Gröbner basis for $\detsystemCorkOne$, whereas when $r<n-2$, we
only compute a Gröbner basis of $\detsystem$ up to degree $r+2$, as past this
degree our algorithm performs no differently to standard $F_5$.

When $r=n-2$,  all reductions to zero are avoided and thus all Macaulay
matrices are full rank. By virtue of \cref{prop:syz-crit}, if a row of
$\macmat_{d,i}$ reduces to zero, then all multiples of this row in
$\macmat_{d',i}$ for $d'>d$ reduce to zero as well, and the standard $F_5$
algorithm avoids these rows. Note however, that there are a significant number
of reductions to zero which do not arise from reductions to zero in lower
degree, as evidenced by the discrepancy between the size of the generating set
for $\Syz(\detsystemCorkOne)$, which is $2n^2-2$ (when $r=n-2$) and the number
of reductions to zero encountered by the standard $F_5$ algorithm. 

Note also that by \cite[Cor.~19]{FaugereSafeySpaenlehauer2013}, the largest
degree of a polynomial in the reduced grevlex Gröbner basis for
$\detidealCorkOne$ is $2n-3$, which is strictly smaller than $2(r+1)=2n-2$.
Thus, \cref{prop:f5-crit} is never used when running either the standard $F_5$
algorithm, or our refined algorithm on $\detidealCorkOne$.

When $r<n-2$, we avoid all reductions to zero in the Macaulay matrices
$\macmat_{r+2,i}$ for all $1\le i\le\binom{n}{r+1}^2$. As the data in
\cref{table:corank-one} shows, this already allows us to avoid a significant
number of reductions to zero. In fact, in all higher corank cases, over half of
the reductions to zero overall appear in degree $r+2$. The number of reductions
to zero in degree $r+2$ (and thus the size of a minimal generating set for
$\Syz(\detideal)$) appears to be 
\[ 
	\binom{n}{r+2}^2\left(\frac{2(r+2)(r+1)}{n-r-1}+2r+2\right)
.\] 
From this quantity one could derive a refined estimate of the complexity of
\cref{alg:det-F5}. 

Note that generically, in the case $r<n-2$, the largest degree of a polynomial
appearing in the reduced grevlex Gröbner basis for $\detideal$ is
$r\cdot(n-r)+1$ again by \cite[Cor.~19]{FaugereSafeySpaenlehauer2013}. Thus, in
this case, \cref{prop:f5-crit} is used as soon as the degree exceeds $2(r+1)$.

Finally, we observe that the speedups which can already be achieved using
the results of this paper, within our software framework, increase when $n$
grows and $n-r$ remains fixed. In the case where $n-r=2$ we obtain speedup
which are close to $10$. This clearly indicates the potential of these results
with respect to practical computation times.

\begin{table}[htp]
  \caption{\textmd{Reductions to zero in standard \(F_5\) (there is none in determinantal-\(F_5\)) as well as ratio of timings for standard $F_5$ compared to determinantal-$F_5$, when computing a $D$-Gröbner basis for the system of $(r+1)$-minors of a generic $n\times n$ matrix of homogeneous linear forms in $k$ variables over $\field=\FF_{65521}$.}}
\label{table:corank-one}
\centering
\begin{tabular}{cccccc}
	$n$ & $r$ & $k$ & $D$ & \multicolumn{1}{p{1.5cm}}{\centering Red. to $0$ \\ (Std. $F_5$)} & $\frac{\text{(Std. $F_5$)}}{\text{(Det. $F_5$)}}$ \\\hline
4   & 2   & 4   & 5   & 56                 & 0.11                    \\
5   & 3   & 4   & 7   & 129                & 0.08                    \\
6   & 4   & 4   & 9   & 239                & 0.43                    \\
7   & 5   & 4   & 11  & 414                & 0.69                    \\
8   & 6   & 4   & 13  & 663                & 1.46                    \\
9   & 7   & 4   & 15  & 959                & 1.65                    \\
10  & 8   & 4   & 17  & 1387               & 2.26                    \\
11  & 9   & 4   & 19  & 1871               & 3.07                    \\
12  & 10  & 4   & 21  & 2525               & 3.99                    \\
13  & 11  & 4   & 23  & 3181               & 4.94                    \\
14  & 12  & 4   & 25  & 4032               & 6.00                    \\
15  & 13  & 4   & 27  & 4977               & 6.03                    \\
16  & 14  & 4   & 29  & 6213               & 7.93                    \\
17  & 15  & 4   & 31  & 7515               & 7.22                    \\
18  & 16  & 4   & 33  & 8845               & 7.99                    \\
19  & 17  & 4   & 35  & 10544              & 8.65                    \\
20  & 18  & 4   & 37  & 12969              & 10.59                   \\\hline
4   & 1   & 9   & 3   & 160                & 1.27                    \\
5   & 2   & 9   & 4   & 450                & 1.77                    \\
6   & 3   & 9   & 5   & 1008               & 2.04                    \\
7   & 4   & 9   & 6   & 1960               & 2.16                    \\
8   & 5   & 9   & 7   & 3456               & 2.40                    \\
9   & 6   & 9   & 8   & 5670               & 2.50                    \\\hline
5   & 1   & 16  & 3   & 800                & 1.34                    \\
6   & 2   & 16  & 4   & 3150               & 1.59                    \\
7   & 3   & 16  & 5   & 9408               & 1.72                    \\\hline
6   & 1   & 25  & 3   & 2800               & 1.28                    \\
7   & 2   & 25  & 4   & 14700              & 1.39                    \\\hline
7   & 1   & 36  & 3   & 7840               & 1.22                   
\end{tabular}
\end{table}

\section*{Acknowledgements}

The authors are supported by \emph{Quantum
Information Center Sorbonne} (QICS); by the Agence
nationale de la recher\-che (ANR), grant agreements
ANR-19-CE40-0018 \textsc{De Rerum Natura} and
ANR-18-CE33-0011 \textsc{SESAME} projects; by the joint
ANR-Austrian Science Fund FWF
projects ANR-22-CE91-0007 \textsc{EAGLES} and
ANR-FWF ANR-19-CE48-0015 \textsc{ECARP}; and by the
EOARD-AFOSR, grant agreement
FA8665-20-1-7029.

\bibliographystyle{alpha}
\bibliography{DeterminantalF5}

\end{document}